\newcommand{\reusefigure}{} % Comment the command to generate the figures again
\newcommand{\includepgfplots}[1]{%
  \ifx\reusefigure\undefined
      \tikzsetnextfilename{#1}%
      \input{src/img/#1.tex}%
  \else
      \includegraphics{img/#1}
  \fi
}
\definecolor{green_new}{RGB}{0,128,0}
\newcommand{\greencircle}{\raisebox{0.5pt}{\tikz{\node[draw,scale=0.4,circle,fill=black!20!green_new](){};}}}
\newcommand{\bluediamond}{\raisebox{0pt}{\tikz{\node[draw,scale=0.4,diamond,fill=black!20!blue](){};}}}
\newcommand{\xmark}{\ding{55}}%
\newtheorem{thm}{Theorem}
\newtheorem{lemma}[thm]{Lemma}
\newtheorem{defin}{Definition}
\newtheorem{prop}{Proposition}
\newtheorem{situation}{Situation}
\newtheoremstyle{algorithm}% name of the style to be used
  {\topsep}% measure of space to leave above the theorem. E.g.: 3pt
  {\topsep}% measure of space to leave below the theorem. E.g.: 3pt
  {}% name of font to use in the body of the theorem
  {}% measure of space to indent
  {\scshape}% name of head font
  {.}% punctuation between head and body
  { }% space after theorem head; " " = normal interword space
  {\thmname{#1}\thmnumber{ #2}\thmnote{ (#3)}}
\theoremstyle{algorithm}
\newtheorem{alg}{Algorithm}
\journal{Neurocomputing}
\begin{document}

\begin{frontmatter}
\title{``Parallel Training Considered Harmful?'':\\ Comparing series-parallel and parallel feedforward network training}
\author{Ant\^{o}nio H. Ribeiro\corref{cor1}}
 \address{Graduate  Program in  Electrical  Engineering at
   Universidade Federal de Minas Gerais (UFMG) -
  Av. Ant\^{o}nio Carlos 6627, 31270-901, Belo Horizonte, MG, Brazil}
 \cortext[cor1]{This work has been supported by the Brazilian agencies CAPES, CNPq and FAPEMIG.}
 \ead{antonio-ribeiro@ufmg.br}
 \ead[url]{antonior92.github.io}
 \author{Luis A. Aguirre}
 \address{Department of Electronic Engineering at
   Universidade Federal de Minas Gerais (UFMG) -
  Av. Ant\^{o}nio Carlos 6627, 31270-901, Belo Horizonte, MG, Brazil}
\ead{aguirre@ufmg.br}

\begin{abstract}
  Neural network models for dynamic systems can be trained either in parallel or in series-parallel configurations. Influenced by early arguments, several papers justify the choice of series-parallel rather than parallel configuration claiming it has a lower computational cost, better stability properties during training and provides more accurate results. Other published results, on the other hand, defend parallel training as being more robust and capable of yielding more accurate long-term predictions. The main contribution of this paper is to present a study comparing both methods under the same unified framework with special attention to three aspects: i)~robustness of the estimation in the presence of noise; ii)~computational cost; and, iii)~convergence. A unifying mathematical framework and simulation studies show situations where each training method provides superior validation results and suggest that parallel training is generally better in more realistic scenarios. An example using measured data seems to reinforce such a claim. Complexity analysis and numerical examples show that both methods have similar computational cost although series-parallel training is more amenable to parallelization. Some informal discussion about stability and convergence properties is presented and explored in the examples.
\end{abstract}

\begin{keyword}
  Neural network \sep parallel training \sep series-parallel training \sep
  system identification \sep output error models
\end{keyword}
\end{frontmatter}

%=======================================================
\section{Introduction}

Neural networks are widely used and studied for modeling nonlinear dynamic
systems~\cite{narendra_identification_1990, chen_non-linear_1990}.
In the seminal paper by Narenda and Parthasarathy~\cite{narendra_identification_1990}
series-parallel and parallel configurations were introduced as possible architectures for training neural networks with data from dynamic systems.
In both cases the neural network parameters are estimated by minimizing the error
between predicted and measured values.
When training the neural network in series-parallel configuration
{\it measured}\, values from past instants are used to make \textit{one-step-ahead} predictions.
On the other hand, when training in parallel configuration previously 
{\it predicted}\, outputs are fed back into the network to compute what is known as the \textit{free-run simulation} of the network.

In a very influential paper~\cite{narendra_identification_1990}, training in series-parallel configuration is said
to be preferable to parallel configuration for three main reasons:
(i)~all signals generated in the identification procedure for series-parallel configuration
are bounded, while this is not guaranteed for the parallel
configuration (this argument is also mentioned in~\cite{zhang_modeling_2006, singh_identification_2013});
(ii)~for the parallel configuration a modified version of backpropagation is needed, resulting
in a greater computational cost for training, while the standard backpropagation can be used in the context of the
series-parallel configuration (this has also been mentioned in~\cite{saad_adaptive_1994, beale_neural_2017,
  saggar_system_2007}); and,
(iii)~assuming that the error tends asymptotically to small values,
the simulated output is not so different from the actual one and,
therefore, the results obtained from two configurations would not be significantly
different (this reasoning was repeated in~\cite{warwick_introduction_1996, kaminnski_genetic_1996, rahman_neural_2000}).
An additional reason that also appears sometimes in the literature is
that: (iv)~the series-parallel training provides better results because of
more accurate inputs to the neural network during training (this argument has also been used
in~\cite{beale_neural_2017,
  saggar_system_2007, petrovic_kalman_2013, tijani_nonlinear_2014, khan_forecasting_2015,
  diaconescu_use_2008}).

It is clear that the seminal paper \cite{narendra_identification_1990} still has great influence on the literature. However, a number of subtle aspects related to the alleged reasons for preferring series-parallel training were not discussed in that paper and, from the  literature, seem to remain unnoticed in general. This may have important practical consequences especially in situations where parallel training would be the best option. Hence, one of the aims of this work is to establish the validity of such arguments.

Other papers followed as somewhat complimentary track highlighting the strengths of parallel training
in system identification. For instance, according to~\cite{su_long-term_1992,
  su_neural_1993} neural networks trained in parallel yield more accurate
long-term predictions than the ones trained in series-parallel; it has been shown that
for diverse types of models, including neural networks, parallel training can be more
robust than series-parallel training in the presence of some types of noise \cite{aguirre_prediction_2010};
and, in a series of papers Piroddi and co-workers have argued that parallel-training
(i.e. free-run simulation error minimization) is a promising technique for structure selection of polynomial models~\cite{piroddi_identification_2003, farina_convergence_2008,
  farina_iterative_2010, farina_simulation_2011,
  farina_identification_2012}. Some papers also compare the two training methods in practical
settings: in~\cite{patan_nonlinear_2012}
neural network parallel models present better validation results than
series-parallel models for modeling a boiler unit; and, in~\cite{zhang_new_2014}
parallel training provided the best results when estimating the parameters of a battery.

The main contribution of this paper is to provide a detailed comparison of parallel and series-parallel training. Three aspects receive special attention: (i) computational cost; (ii) robustness of the estimation in the presence of noise; and, (iii) convergence.
The rest of the paper is organized as follows:
Section~\ref{sec:background} reviews backpropagation and nonlinear least-squares algorithms.
Section~\ref{sec:nntraining} provides the nonlinear least-squares framework used along this paper.
A complexity analysis is presented in Section~\ref{sec:complexity-analysis}.
Section~\ref{sec:unifying-framework} give asymptotic properties
of the methods in the presence noise. In Section~\ref{sec:examples}, examples
are used to investigate the effect of the noise and the training time.
Section~\ref{sec:discussion} discusses signal unboundedness, the convergence to ``poor''
local solutions and computational considerations.
Final comments and future work are presented
in Section~\ref{sec:concl-future-work}.

\section{Background}
\label{sec:background}

This section provides a quick review of algorithms and ideas
used to build our formulation:
Section~\ref{sec:parall-text-seri}
explains series-parallel and parallel training;
Section~\ref{sec:non-linear-least-squares} explains the Levenberg-Marquardt
algorithm~\cite{marquardt_algorithm_1963}, presenting a formulation similar to
the one in~\cite{fletcher_modified_1971};
Section~\ref{sec:modif-backpr} explains a modified backpropagation algorithm,
proposed in~\cite{hagan_training_1994}.

\subsection{Parallel \textit{vs} series-parallel training}
\label{sec:parall-text-seri}
Figure~\ref{fig:parallel-vs-seriesparallel} illustrates the difference
between parallel and series-parallel training. \textit{Parallel configuration} feeds the
output back to the input of the network and, hence, uses its own previous
values to predict the next output of the system being modeled.
\textit{Series-parallel configuration}, on the other hand,
uses the true measured output rather than feeding back the estimated
one. We formalize these concepts along this section.

\begin{figure}[htpb]
  \centering
  \subfloat[Parallel]{
    \includegraphics[width=0.43\textwidth]{./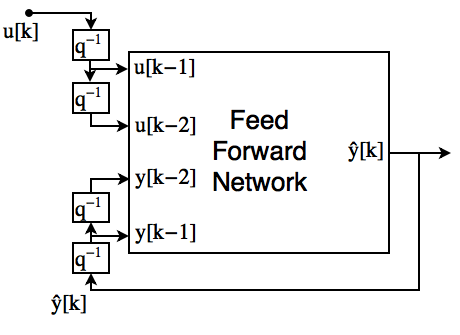}}
  \subfloat[Series-parallel]{
    \includegraphics[width=0.45\textwidth]{./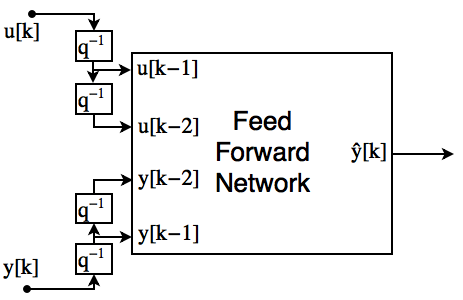}}
  \caption{
    Parallel and series-parallel neural network architectures for
    modeling the dynamic system
    $\mathbf{y}[k] = \mathbf{F}(\mathbf{y}[k-1], \mathbf{y}[k-2],
    \mathbf{u}[k-1], \mathbf{u}[k-2])$. The delay operator $q^{-1}$ is such that
    $\mathbf{y}[k-1]=q^{-1}\mathbf{y}[k]$.}
  \label{fig:parallel-vs-seriesparallel}
\end{figure}

Consider the following dataset:
$
  \mathcal{Z} = \{(\mathbf{u}[k], \mathbf{y}[k]), k = 1, 2, \hdots, N\} ,
$
containing  a sequence of sampled inputs-output pairs.
Here ${\mathbf{u}[k]\in \mathbb{R}^{N_u}}$ and ${\mathbf{y}[k]\in \mathbb{R}^{N_y}}$
are vectors containing all the inputs and outputs of interest at instant $k$.
The output $\mathbf{y}[k]$ is correlated with its own past
values $\mathbf{y}[k-1]$, $\mathbf{y}[k-2]$, $\mathbf{y}[k-3]$, $\cdots$, and with
past input values $\mathbf{u}[k-\tau_d]$,
$\mathbf{u}[k-\tau_d-1]$, $\mathbf{u}[k-\tau_d-2]$, $\cdots$. The integer
$\tau_d \ge 0$ is the input-output delay (when changes in the input take some time
to affect the output).

%The correlation of the output with its past values (the ``memory'' of the system)
%caracterize a dynamic input-output relation. The field
%of system identification~\cite{ljung1998system, nelles2001nonlinear}
%proposes several methods to try to estimate models for dynamic systems.

The aim is to find a difference equation model:
\begin{equation}
  \label{eq:diff_eq}
    \mathbf{y}[k] = \mathbf{F}(\mathbf{y}[k-1], \hdots, \mathbf{y}[k-n_y],
                       \mathbf{u}[k-\tau_d], \hdots,
                       \mathbf{u}[k-n_u]; \mathbf{\Theta}),
\end{equation}                    
that best describes the data in $\mathcal{Z}$.
The model is described by the nonlinear function $\mathbf{F}$ parameterized
by ${\mathbf{\Theta}\in \mathbb{R}^{N_\Theta}}$;
the maximum input and output lags $n_y$ and $n_u$;
and, the input-output delay $\tau_d$. It is assumed that a finite number of past terms can
be used to describe the output.

Neural networks are an appropriate choice for representing the function $\mathbf{F}$
because they are universal approximators. A neural network
with as few as one hidden layer can approximate any
measurable function $\mathbf{F}^*$ with arbitrary accuracy
within a given compact set~\cite{hornik_multilayer_1989}.

From now on, the simplified notation:
${\footnotesize \underline{\mathbf{y}}_{[k]} = [\mathbf{y}[k-1]^T, \hdots,
  \mathbf{y}[k-n_y]^T]^T}$, \\
${\footnotesize \underline{\mathbf{u}}_{[k]} = [\mathbf{u}[k-\tau_d]^T, \hdots,
  \mathbf{u}[k-n_u]^T]^T}$
will be used, hence,
Equation~(\ref{eq:diff_eq}) can be rewritten as:
$\mathbf{y}[k] = \mathbf{F}(\underline{\mathbf{y}}_{[k]}, \underline{\mathbf{u}}_{[k]} ; \mathbf{\Theta})$.

\begin{defin}[One-step-ahead prediction]
  For a given function $\mathbf{F}$, parameter vector $\mathbf{\Theta}$ 
  and dataset $\mathcal{Z}$, the one-step-ahead prediction is defined as:
  \begin{equation}
    \label{eq:one-step-ahead-prediction}
    \hat{\mathbf{y}}_1[k]  = \mathbf{F}(\underline{\mathbf{y}}_{[k]},
    \underline{\mathbf{u}}_{[k]}; \mathbf{\Theta}),
  \end{equation}
\end{defin}
\begin{defin}[Free-run simulation]
  For a given function $\mathbf{F}$, parameter vector $\mathbf{\Theta}$,
  dataset $\mathcal{Z}$ and a set of initial conditions $\{\mathbf{y}_0[k]\}_{k=1}^{n_y}$,
  the free-run simulation is defined using the recursive formula:
  \begin{equation}
    \label{eq:free_run_simulation_recursion}
    \hat{\mathbf{y}}_s[k] =
    \begin{dcases}
      \mathbf{y}_0[k],~1 \le  k\le n_y; \\
        \mathbf{F}(\hat{\mathbf{y}}_s[k-1], \hdots,  \hat{\mathbf{y}}_s[k-n_y],
        \underline{\mathbf{u}}_{[k]}; \mathbf{\Theta}),~
        k> n_y,
    \end{dcases}
  \end{equation}
  \noindent
  The vector of initial conditions is defined as
  ${\underline{\mathbf{y}}_0=[\mathbf{y}_0[1]^T}$, ${\hdots, \mathbf{y}_0[n_y]^T]^T}$.
\end{defin}

Both training procedures minimize some norm of the error and may be regarded
as \textit{prediction error methods}.\footnote{In the context of predictor error methods the
  nomenclature NARX (\textit{nonlinear autoregressive model with exogenous input})
  and NOE (\textit{nonlinear output error model}) is often used to refer to the models
  obtained using, respectively, series-parallel and parallel
  training.}
That is, let ${\mathbf{e}=[\mathbf{e}[1]^T, \hdots, \mathbf{e}[N]^T]^T}$,
in this paper we consider the parameters are estimated minimizing 
the sum of square errors $\tfrac{1}{2}\|\mathbf{\mathbf{e}}\|^2$.\footnote{
  This loss function
  is optimal in the maximum likelihood sense
  if the residuals are considered to be Gaussian white
  noise~\cite{nelles_nonlinear_2013}.}
The one-step-ahead error,
${\mathbf{e}_1[k]={\hat{\mathbf{y}}_1[k] -\mathbf{y}[k]}}$, is used for series-parallel
training and the free-run simulation
error, ${\mathbf{e}_s[k]=\hat{\mathbf{y}}_s[k] - \mathbf{y}[k]}$, for parallel
training.

%----------------------------------------------------------------------------
\subsection{Nonlinear least-squares}
\label{sec:non-linear-least-squares}

Let $\mathbf{\Theta}\in\mathbb{R}^{N_\Theta}$ be a vector of
parameters and $\mathbf{e}(\mathbf{\Theta}) \in \mathbb{R}^{N_e}$ an error vector.
In order to estimate the parameter vector $\mathbf{\Theta}$ the
sum of square errors
$ V(\mathbf{\Theta}) = \tfrac{1}{2}\|\mathbf{e}(\mathbf{\Theta})\|^2$ is minimized.
Its gradient vector and Hessian matrix may be computed as:~\cite[p. 246]{nocedal_numerical_2006}
\begin{eqnarray}
  \label{eq:gradient}
  \tfrac{\partial V}{\partial \mathbf{\Theta}}
  &=&  \left[\tfrac{\partial \mathbf{e}}{\partial \mathbf{\Theta}}\right]^T
      \mathbf{e}(\mathbf{\Theta}),\\
  \label{eq:hessian}
  \tfrac{\partial^2 V}{\partial \mathbf{\Theta}^2} &=&
  \left[\tfrac{\partial \mathbf{e}}{\partial \mathbf{\Theta}}\right]^T
  \left[\tfrac{\partial \mathbf{e}}{\partial \mathbf{\Theta}}\right]
  + \sum_{i=1}^{N_e} e_i  \tfrac{\partial^2e_i}{\partial\mathbf{\Theta}^2},
\end{eqnarray}
where
${\tfrac{\partial \mathbf{e}}{\partial \mathbf{\Theta}} \in\mathbb{R}^{N_e \times N_\Theta}}$
is  the Jacobian matrix associated with $\mathbf{e}(\mathbf{\Theta})$.
Non-linear least-squares algorithms usually update the solution iteratively
  ($\mathbf{\Theta}^{n+1} = \mathbf{\Theta}^n + \Delta \mathbf{\Theta}^n$)
and exploit the special structure of the gradient and Hessian
of $V(\mathbf{\Theta})$, in order to compute the parameter update
$\Delta \mathbf{\Theta}^n$.

The Levenberg-Marquardt algorithm considers a parameter
update~\cite{marquardt_algorithm_1963}:
\begin{equation}
  \label{eq:levenberg_marquardt}
  \Delta \mathbf{\Theta}^n =
  -{\left[\left.\tfrac{\partial \mathbf{e}}{\partial \mathbf{\Theta}}\right|^T_n
      \left.\tfrac{\partial \mathbf{e}}{\partial \mathbf{\Theta}}\right|_n
      + \lambda^n D^n\right]}^{-1} 
  \left[\left.\tfrac{\partial \mathbf{e}}{\partial \mathbf{\Theta}}\right|_n
  \right]^T \mathbf{e}_n,
\end{equation}
for which $\lambda^n$ is a non-negative scalar and $D^n$ is a non negative
diagonal matrix. Furthermore, $\mathbf{e}_n$ and $\left.\tfrac{\partial \mathbf{e}}{\partial \mathbf{\Theta}}\right|_n$
are the error and the corresponding Jacobian matrix evaluated at $\mathbf{\Theta}^n$.

There are different ways of updating $\lambda^n$ and $D^n$.
The update strategy presented here
is similar to~\cite{fletcher_modified_1971}.
The elements of the diagonal matrix $D^n$ are chosen equal to the
elements in the diagonal of
$\left.\tfrac{\partial \mathbf{e}}{\partial \mathbf{\Theta}}\right|^T_n
      \left.\tfrac{\partial \mathbf{e}}{\partial \mathbf{\Theta}}\right|_n$.
And $\lambda^n$ is increased or decreased according to the agreement between
the local model
$\left(\phi_n(\Delta \mathbf{\Theta}^n) = \tfrac{1}{2}\left\|\mathbf{e}_n +
  \left[\left. \tfrac{\partial \mathbf{e}}{\partial
        \mathbf{\Theta}}\right|_n\right]\Delta \mathbf{\Theta}^n\right\|^2\right)$
and the real objective function $V(\mathbf{\Theta}_n)$. The degree of agreement is measured
using the following ratio:
\begin{equation}
  \label{eq:reduction_ratio}
  \rho_n = \frac{V(\mathbf{\Theta}^n) - V(\mathbf{\Theta}^n + \Delta \mathbf{\Theta}^n)}{\phi_n(\mathbf{0}) - \phi_n(\Delta \mathbf{\Theta}^n)}.
\end{equation}
One iteration of the algorithm is summarized next:
\begin{alg}[Levenberg-Marquardt Iteration]
  \label{alg:levenberg-marquardt}
For a given $\mathbf{\Theta}^n$ and $\lambda^n$:
\begin{enumerate}
\item Compute $\mathbf{e}(\mathbf{\Theta}^n)$ and $\tfrac{\partial \mathbf{e}}{\partial \mathbf{\Theta}}(\mathbf{\Theta}^n)$, if not already computed.
\item $\text{diag}({D}^n)\leftarrow \text{diag}\left(\left.\tfrac{\partial \mathbf{e}}{\partial \mathbf{\Theta}}\right|^T_n
      \left.\tfrac{\partial \mathbf{e}}{\partial \mathbf{\Theta}}\right|_n\right)$.
\item Solve~(\ref{eq:levenberg_marquardt}) and compute $\Delta\mathbf{\Theta}^n$ .
\item Compute $\rho_n$ as in~(\ref{eq:reduction_ratio}).
\item $\lambda^{n+1}\leftarrow 4\lambda^n$ if $\rho_n > \tfrac{3}{4}$; $\lambda^{n+1}\leftarrow \tfrac{1}{2}\lambda^n$ if $\rho_n < \tfrac{1}{4}$; otherwise, $\lambda^{n+1}\leftarrow \lambda^n$ .
\item $\mathbf{\Theta}^{n+1} \leftarrow \mathbf{\Theta}^n+ \Delta\mathbf{\Theta}^n$ if $\rho_n> 10^{-3}$; otherwise, $\mathbf{\Theta}^{n+1} \leftarrow \mathbf{\Theta}^n$.
\item $n=n+1$.
\end{enumerate}
\end{alg}

The backpropagation algorithm, which can be used for computing the
derivative of static functions is explained in the sequence.
How to adapt this algorithm in order to compute the Jacobian matrix both for parallel
and series-parallel training is explained in
Section~\ref{sec:nntraining}.

%----------------------------------------------------------------------------
\subsection{Backpropagation}
\label{sec:modif-backpr}

Consider a multi-layer feedforward network, such as the three-layer network in
Figure~\ref{fig:network}. This network can be seen as a function that relates the
input $\mathbf{x}\in\mathbb{R}^{N_x}$ to the output $\mathbf{z}\in\mathbb{R}^{N_z}$.
The parameter vector $\mathbf{\Theta}$ contains
all weights $w^{(n)}_{i, j}$ and bias terms $\gamma^{(n)}_i$ of the network.
This subsection presents a modified version of backpropagation~\cite{hagan_training_1994}
for computing the neural
network output $\mathbf{z}$ and its Jacobian matrix
for a given input $\mathbf{x}$.
The notation used is the one displayed
in Figure~\ref{fig:network}.

\begin{figure}[htpb]
  \centering
  \includegraphics[width=1\textwidth]{./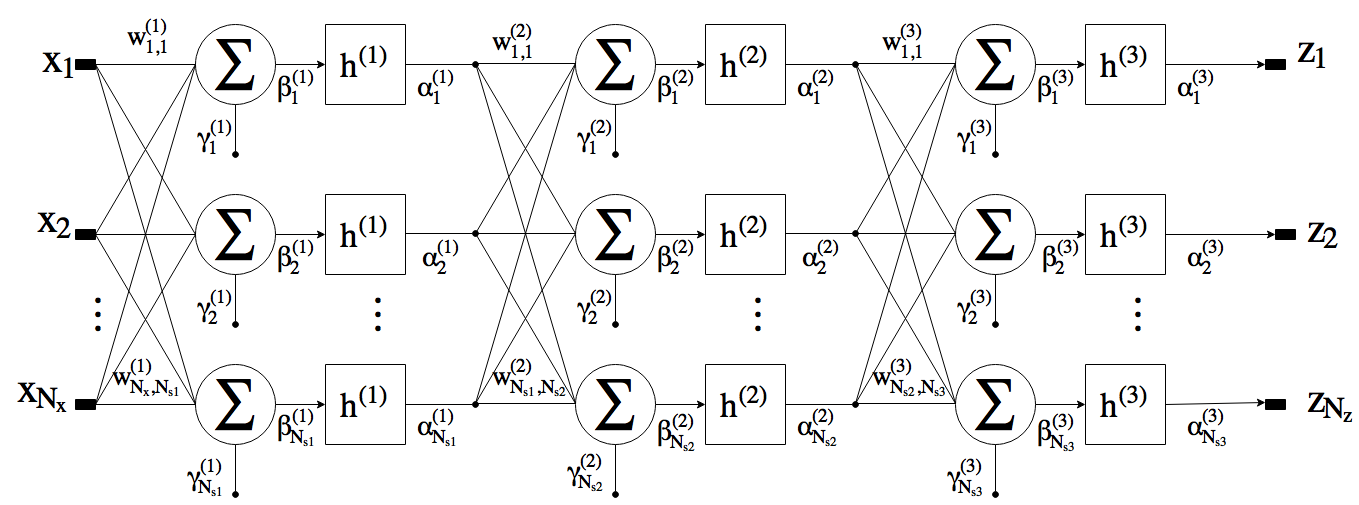}
  \caption{Three-layer feedforward network.}
  \label{fig:network}
\end{figure}

%----------------------------------------
\subsubsection{Forward stage}
\label{sec:ForwardStage}

For a network with $\mathcal{L}$ layers the output nodes can be computed using
the following recursive matrix relation:
\begin{equation}
   \label{eq:neuralnet_forward}
  \boldsymbol{\alpha}^{(n)}
  =
  \begin{cases}
    \mathbf{x} & n = 0; \\
    \mathbf{h}^{(n)}(W^{(n)}\boldsymbol{\alpha}^{(n-1)} + \boldsymbol{\gamma}^{(n)})
    &  n = 1, \ldots, \mathcal{L},
    \end{cases}
\end{equation}

\noindent
where, for the $n$-th layer, $W^{(n)}$ is a matrix containing the weights $w^{(n)}_{i, j}$, 
$\boldsymbol{\gamma}^{(n)}$ is a vector containing the the bias terms $\gamma^{(n)}_i$
and $\mathbf{h}^{(n)}$ applies the nonlinear function $h^{(n)}$ element-wise.
The output $\mathbf{z}$ is given by:
\begin{equation}
  \label{eq:neuralnet_output}
  \mathbf{z}  =  \boldsymbol{\alpha}^{(\mathcal{L})}.
\end{equation}

%-----------------------------------------------------
\subsubsection{Backward stage}
\label{sec:BackwardStage}

The follow recurrence relation can be used to compute
$\tfrac{\partial  \mathbf{z}}{\partial \boldsymbol{\beta}^{(n)}}$
for every $n$:
\begin{equation}
  \label{eq:neuralnet_backward}
  \frac{\partial  \mathbf{z}}{\partial \boldsymbol{\beta}^{(n)}}
  =
  \begin{cases}
    \dot{H}^{(\mathcal{L})}\big(\boldsymbol{\beta}^{(\mathcal{L})}\big)&
    n = \mathcal{L}; \\
    \tfrac{\partial  \mathbf{z}}{\partial \boldsymbol{\beta}^{(n+1)}}\cdot
    W^{(n+1)}\cdot\dot{H}^{(n)}\big(\boldsymbol{\beta}^{(n)}\big) &
    n = \mathcal{L}-1,\ldots, 1.
  \end{cases}
\end{equation}

\noindent
where $\dot{H}^{(n)}$ is given by the following diagonal matrix:
\begin{equation*}
  \dot{H}^{(n)}\big(\boldsymbol{\beta}^{(n)}\big)
  = \text{diag}\Big(\dot{h}^{(n)}\big(\boldsymbol{\beta}^{(n)}_{1}\big),
  \cdots, \dot{h}^{(n)}\big(\boldsymbol{\beta}^{(n)}_{N_{sk}}\big)\Big).
\end{equation*}

The recursive expression~(\ref{eq:neuralnet_backward}) follows from
applying the chain rule\\
$\left({\tfrac{\partial  \mathbf{z}}{\partial \boldsymbol{\beta}^{(n)}}
    =
    \tfrac{\partial  \mathbf{z}}{\partial \boldsymbol{\beta}^{(n+1)}}
    \tfrac{\partial \boldsymbol{\beta}^{(n+1)}}{\partial \boldsymbol{\alpha}^{(n)}}
    \tfrac{\partial \boldsymbol{\alpha}^{(n)}}{\partial \boldsymbol{\beta}^{(n)}}}\right)$,
and considering
${\tfrac{\partial \boldsymbol{\beta}^{(n+1)}}{\partial \boldsymbol{\alpha}^{(n)}}=W^{(n+1)}}$
and
${\tfrac{\partial \boldsymbol{\alpha}^{(n)}}{\partial
  \boldsymbol{\beta}^{(n)}} = \dot{H}^{(n)}}$.

%-------------------------------
\subsubsection{Computing derivatives}

The derivatives of $\mathbf{z}$ with respect
to  $w^{(n)}_{i, j}$ and $\gamma^{(n)}_i$
can be used to form the Jacobian matrix
$\tfrac{\partial  \mathbf{z}}{\partial \mathbf{\Theta}}$
and  can be computed using the following expressions:
\begin{eqnarray}
  \label{eq:neuralnet_weight}
  \tfrac{\partial  \mathbf{z}}{\partial w^{(n)}_{i, j}}
  &=&
  \tfrac{\partial  \mathbf{z}}{\partial \beta^{(n)}_i}
  \tfrac{\partial \beta^{(n)}_i}{\partial w^{(n)}_{i, j}}
  =
  \tfrac{\partial  \mathbf{z}}{\partial \beta^{(n)}_i}
  \alpha^{(n-1)}_j;
  \\
  \label{eq:neuralnet_bias}
  \tfrac{\partial  \mathbf{z}}{\partial \gamma_i^{(n)}}
  &=&
  \tfrac{\partial  \mathbf{z}}{\partial \beta^{(n)}_i}
  \tfrac{\partial \beta^{(n)}_i}{\partial \gamma_i^{(n)}}
  =
  \tfrac{\partial  \mathbf{z}}{\partial \beta^{(n)}_i}.
\end{eqnarray}

\noindent
Furthermore, the derivatives of $\mathbf{z}$
with respect to the inputs $\mathbf{x}$ are:
\begin{equation}
  \label{eq:neuralnet_inputoutput_derivatives}
  \tfrac{\partial  \mathbf{z}}{\partial \mathbf{x}}
=
\tfrac{\partial  \mathbf{z}}{\partial \boldsymbol{\beta}^{(1)}}
\tfrac{\partial  \boldsymbol{\beta}^{(1)}}{\partial \boldsymbol{\alpha}^{(0)}}
=
\tfrac{\partial  \mathbf{z}}{\partial \boldsymbol{\beta}^{(1)}}
W^{(1)}.
\end{equation}

The backpropagation algorithm presented here
can be directly applied to series-parallel
training. For parallel training, however, a different procedure is needed.
A recurrent formula for that is introduced in the following section.

%=======================================================
\section{Training}
\label{sec:nntraining}

Unlike other machine learning
applications (e.g. natural language processing and computer vision)
where there are enormous datasets available to train neural network
models, the datasets available for \textit{system identification}
are usually of moderate size.  The available data is usually obtained through tests with
limited duration because of practical
and economical reasons. And, even when there is a long record of input-output data,
it either does not contain meaningful dynamic
behavior~\cite{marquardt_algorithm_1963}
or the system cannot be considered time-invariant over the entire record,
resulting in the necessity of selecting smaller portions of this longer
dataset for training.

Because large datasets are seldom available in system identification problems,
neural networks for such applications are usually restricted to a few hundred weights.
The Levenberg-Marquardt method does provide a
fast convergence rate~\cite{nocedal_numerical_2006}
and has been described as very efficient
for batch training of moderate size problems~\cite{hagan_training_1994},
where the memory used by this algorithm is not prohibitive.
Hence, it will be the method of choice for
training neural networks in this paper.

Besides that, recurrent neural networks often present vanishing gradients
that may prevent the progress of the optimization algorithm.
The use of second-order information (as in the Levenberg-Marquardt algorithm)
helps to mitigate this problem~\cite{bengio_learning_1994}.

In the series-parallel configuration the parameters are estimated by minimizing
 $\tfrac{1}{2}\|\mathbf{e}_1\|^2$, what can be done using the algorithm
described in Section~\ref{sec:non-linear-least-squares}. The required Jacobian matrix
$\tfrac{\partial \mathbf{e}_1}{\partial \mathbf{\Theta}}$ can be computed
according to the following well known result.
\begin{prop}
  \label{prop:seri-parall-train-1}
  The Jacobian matrix of $\mathbf{e}_1$ with respect to
  $\mathbf{\Theta}$  is \\
  $\tfrac{\partial \mathbf{e}_1}{\partial \mathbf{\Theta}} = \left[\tfrac{\partial \mathbf{e}_1[1]}{\partial \mathbf{\Theta}}^T, \cdots,
  \tfrac{\partial \mathbf{e}_1[N]}{\partial \mathbf{\Theta}}^T\right]^T$,
  where
  ${\tfrac{\partial  \mathbf{e}_1[k]}{\partial \mathbf{\Theta}} = 
     \tfrac{\partial  \hat{\mathbf{y}}_1[k]}{\partial \mathbf{\Theta}} =
    \tfrac{\partial \mathbf{F}}{\partial \mathbf{\Theta}}(\underline{\mathbf{y}}_{[k]},
    \underline{\mathbf{u}}_{[k]}; \mathbf{\Theta})}$
 that can be computed using the backpropagation described
 in Section~\ref{sec:modif-backpr}.
\end{prop}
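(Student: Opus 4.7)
The plan is to observe that, unlike in the parallel configuration, series-parallel training has no feedback path through the parameters, so the Jacobian reduces immediately to that of the static function $\mathbf{F}$, to which the backpropagation of Section~\ref{sec:modif-backpr} applies directly.

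First I would stack the per-sample errors into $\mathbf{e}_1 = [\mathbf{e}_1[1]^T, \ldots, \mathbf{e}_1[N]^T]^T$, which gives the block structure $\tfrac{\partial \mathbf{e}_1}{\partial \mathbf{\Theta}} = [\tfrac{\partial \mathbf{e}_1[1]}{\partial \mathbf{\Theta}}^T, \ldots, \tfrac{\partial \mathbf{e}_1[N]}{\partial \mathbf{\Theta}}^T]^T$ by definition of the Jacobian of a stacked vector. It thus suffices to compute each block $\tfrac{\partial \mathbf{e}_1[k]}{\partial \mathbf{\Theta}}$ separately.

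Next I would use $\mathbf{e}_1[k] = \hat{\mathbf{y}}_1[k] - \mathbf{y}[k]$ and note that the measured output $\mathbf{y}[k]$ is data, not a function of $\mathbf{\Theta}$, so $\tfrac{\partial \mathbf{e}_1[k]}{\partial \mathbf{\Theta}} = \tfrac{\partial \hat{\mathbf{y}}_1[k]}{\partial \mathbf{\Theta}}$. By the definition of one-step-ahead prediction~(\ref{eq:one-step-ahead-prediction}), we have $\hat{\mathbf{y}}_1[k] = \mathbf{F}(\underline{\mathbf{y}}_{[k]}, \underline{\mathbf{u}}_{[k]}; \mathbf{\Theta})$. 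The crucial observation, which distinguishes this case from the parallel one, is that both $\underline{\mathbf{y}}_{[k]}$ and $\underline{\mathbf{u}}_{[k]}$ are composed entirely of measured samples from $\mathcal{Z}$ and therefore do not depend on $\mathbf{\Theta}$. Consequently $\mathbf{\Theta}$ enters $\hat{\mathbf{y}}_1[k]$ only through the last argument of $\mathbf{F}$, yielding $\tfrac{\partial \hat{\mathbf{y}}_1[k]}{\partial \mathbf{\Theta}} = \tfrac{\partial \mathbf{F}}{\partial \mathbf{\Theta}}(\underline{\mathbf{y}}_{[k]}, \underline{\mathbf{u}}_{[k]}; \mathbf{\Theta})$.

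Finally, since $\mathbf{F}$ is implemented as the feedforward network of Section~\ref{sec:modif-backpr} with input $\mathbf{x} = [\underline{\mathbf{y}}_{[k]}^T, \underline{\mathbf{u}}_{[k]}^T]^T$ and output $\mathbf{z} = \hat{\mathbf{y}}_1[k]$, the partial $\tfrac{\partial \mathbf{F}}{\partial \mathbf{\Theta}}$ can be obtained by a forward pass~(\ref{eq:neuralnet_forward})--(\ref{eq:neuralnet_output}) followed by the backward recursion~(\ref{eq:neuralnet_backward}) and the assembly formulas~(\ref{eq:neuralnet_weight})--(\ref{eq:neuralnet_bias}). There is essentially no obstacle here; the ``hard part'' is simply to make explicit that the inputs to $\mathbf{F}$ are $\mathbf{\Theta}$-independent, which is exactly what makes series-parallel training amenable to standard backpropagation and what will fail in the parallel case treated later.
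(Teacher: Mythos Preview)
Your proposal is correct and follows essentially the same approach as the paper, which disposes of the proposition in a single line (``This results readily from differentiating~(\ref{eq:one-step-ahead-prediction})''). You simply spell out explicitly what the paper leaves implicit: that $\mathbf{y}[k]$, $\underline{\mathbf{y}}_{[k]}$, and $\underline{\mathbf{u}}_{[k]}$ are all $\mathbf{\Theta}$-independent, so differentiating~(\ref{eq:one-step-ahead-prediction}) yields the stated Jacobian directly.
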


\begin{proof}
  This results readily from differentiating~(\ref{eq:one-step-ahead-prediction}).
\end{proof}

In the parallel configuration the parameters are estimated by minimizing
$\tfrac{1}{2}\|\mathbf{e}_s\|^2$. There are
two different ways to take into account the initial conditions
${\underline{\mathbf{y}}_0}$: (i)~by fixing
$\underline{\mathbf{y}}_0$ and estimating  $\mathbf{\Theta}$;
and, (ii)~by defining an extended parameter vector
${\mathbf{\Phi} = [\mathbf{\Theta}^T,\underline{\mathbf{y}}_0^T]^T}$ 
and estimating  $\underline{\mathbf{y}}_0$  and $\mathbf{\Theta}$ simultaneously.

When using formulation (i), a suitable choice
is to set the initial conditions equal to the measured
outputs (${\mathbf{y}_0[k] = \mathbf{y}[k],~ k = 1, \ldots, n_y}$).
When using formulation (ii) the measured output vector may be used as
an initial guess that will be refined by the optimization algorithm.

The optimal choice for the initial condition would be
${\mathbf{y}_0[k]=\mathbf{y}^*[k]}$ for \\
${k=1, \cdots, n_y}$. Formulation (i) uses the non-optimal choice
${\mathbf{y}_0[k] = \mathbf{y}[k]\neq \mathbf{y}^*[k]}$.
Formulation (ii) goes one step further and include the initial
conditions $\mathbf{y}_0[k]$ in the optimization problem,
so it converges to $\mathbf{y}^*[k]$ and hence improves the
parameter estimation.

The required Jacobian matrices
$\tfrac{\partial \mathbf{e}_s}{\partial \mathbf{\Theta}}$ and
$\tfrac{\partial \mathbf{e}_s}{\partial \underline{\mathbf{y}}_0}$
can be computed according to the following proposition.
\begin{prop}
  \label{prop:main}
  The Jacobian matrices of $\mathbf{e}_1$ with respect to
  $\mathbf{\Theta}$  and $\underline{\mathbf{y}}_0$ are
  ${\tfrac{\partial \mathbf{e}_s}{\partial \mathbf{\Theta}} =
  \left[\tfrac{\partial \mathbf{e}_s[1]}{\partial \mathbf{\Theta}}^T, \cdots,
    \tfrac{\partial \mathbf{e}_s[N]}{\partial \mathbf{\Theta}}^T\right]^T}$
  and
  ${\tfrac{\partial \mathbf{e}_s}{\partial \underline{\mathbf{y}}_0} =
  \left[\tfrac{\partial \mathbf{e}_s[1]}{\partial \underline{\mathbf{y}}_0}^T, \cdots,
    \tfrac{\partial \mathbf{e}_s[N]}{\partial \underline{\mathbf{y}}_0}^T\right]^T}$
  where
  ${\tfrac{\partial  \mathbf{e}_s[k]}{\partial \mathbf{\Theta}} = 
    \tfrac{\partial  \hat{\mathbf{y}}_s[k]}{\partial \mathbf{\Theta}}}$
  and
  ${\tfrac{\partial  \mathbf{e}_s[k]}{\partial \underline{\mathbf{y}}_0} = 
    \tfrac{\partial  \hat{\mathbf{y}}_s[k]}{\partial \underline{\mathbf{y}}_0}}$
  can be computed according to the following recursive formulas:
  \begin{small}
    \begin{equation}
      \label{eq:jacobian_recursive}
      \frac{\partial\hat{\mathbf{y}}_s[k]}{\partial \mathbf{\Theta}}
      = 
      \begin{dcases}
        \mathbf{0},~ 1 \le k \le n_y; \\
        \begin{split}
        \tfrac{\partial\mathbf{F}}{\partial \mathbf{\Theta}}(\hat{\mathbf{y}}_{[k]},
        \mathbf{u};_{[k]} \mathbf{\Theta})
          +\sum_{i=1}^{n_y}
          \tfrac{\partial\mathbf{F}}{\partial\mathbf{y}[k-i]}(\hat{\mathbf{y}}_{[k]},
          \mathbf{u}_{[k]};& \mathbf{\Theta})
          \tfrac{\partial\hat{\mathbf{y}}_s[k-i]}{\partial \mathbf{\Theta}}, \\
          & k > n_y,
        \end{split}
      \end{dcases}
    \end{equation}
  \end{small}
  \begin{small}
    \begin{equation}
      \label{eq:jacobian_recursive_init}
      \frac{\partial\hat{\mathbf{y}}_s[k]}{\partial \underline{\mathbf{y}}_0}
      = 
      \begin{dcases}
        D^{(n)},~ 1 \le k \le n_y;\\
        \begin{split}
        \sum_{i=1}^{n_y}&
        \tfrac{\partial\mathbf{F}}{\partial\mathbf{y}[k-i]}(\hat{\mathbf{y}}_{[k]},
        \mathbf{u}_{[k]}; \mathbf{\Theta})
        \tfrac{\partial\hat{\mathbf{y}}_s[k-i]}{\partial
          \underline{\mathbf{y}}_0},~ k > n_y,
        \end{split}
      \end{dcases}
    \end{equation}
  \end{small}
  where $D^{(n)}\in \mathbb{R}^{N_y \times n_yN_y}$ is defined as:
  \begin{equation}
    \big\{D^{(n)}\big\}_{i,j} =
    \begin{dcases}
      1, & \text{if } j=(n-1)\cdot N_y+i, \\
      0, & \text{otherwise.}
    \end{dcases}
    \end{equation}
  \end{prop}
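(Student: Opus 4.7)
The plan is to derive both recursive formulas by applying the chain rule to the definition of free-run simulation, after first observing that the error Jacobians coincide with the prediction Jacobians because $\mathbf{y}[k]$ does not depend on $\mathbf{\Theta}$ or $\underline{\mathbf{y}}_0$. Concretely, since $\mathbf{e}_s[k] = \hat{\mathbf{y}}_s[k] - \mathbf{y}[k]$ and the measurements $\mathbf{y}[k]$ are treated as fixed data, we have $\partial \mathbf{e}_s[k]/\partial \mathbf{\Theta} = \partial \hat{\mathbf{y}}_s[k]/\partial \mathbf{\Theta}$ and likewise for $\underline{\mathbf{y}}_0$; this reduces the problem to differentiating the two cases of~(\ref{eq:free_run_simulation_recursion}). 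The stacked Jacobians for the full error vector then follow by vertically concatenating the per-sample Jacobians.

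For the base case $1 \le k \le n_y$ the free-run output equals the initial condition $\mathbf{y}_0[k]$. Since the initial conditions are taken as independent of $\mathbf{\Theta}$, the derivative with respect to $\mathbf{\Theta}$ vanishes. The derivative with respect to the stacked vector $\underline{\mathbf{y}}_0$ is the matrix that selects the $k$-th block $\mathbf{y}_0[k]$, which is exactly the selection matrix $D^{(k)}$ defined in the statement.

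For the inductive step $k > n_y$, I would apply the multivariable chain rule to $\hat{\mathbf{y}}_s[k] = \mathbf{F}(\hat{\mathbf{y}}_s[k-1],\ldots,\hat{\mathbf{y}}_s[k-n_y], \underline{\mathbf{u}}_{[k]}; \mathbf{\Theta})$. Here $\mathbf{F}$ depends on $\mathbf{\Theta}$ both directly, through its last argument, and indirectly, through the delayed simulated outputs $\hat{\mathbf{y}}_s[k-i]$; it depends on $\underline{\mathbf{y}}_0$ only indirectly, through those delayed outputs. Splitting the total derivative into direct and indirect contributions yields the direct partial term $\partial \mathbf{F}/\partial \mathbf{\Theta}$ plus the sum $\sum_{i=1}^{n_y} (\partial \mathbf{F}/\partial \mathbf{y}[k-i])(\partial \hat{\mathbf{y}}_s[k-i]/\partial \mathbf{\Theta})$, which is~(\ref{eq:jacobian_recursive}); the analogous derivation for $\underline{\mathbf{y}}_0$, with the direct term dropped since $\mathbf{F}$ has no explicit dependence on $\underline{\mathbf{y}}_0$, gives~(\ref{eq:jacobian_recursive_init}). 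The recursion is well posed because each right-hand side refers only to previously computed indices.

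The argument is essentially a straightforward induction on $k$ combined with bookkeeping for the block structure of $\underline{\mathbf{y}}_0$, so the main subtlety is notational rather than mathematical: one must carefully separate the partial derivative $\partial \mathbf{F}/\partial \mathbf{y}[k-i]$, which treats the other arguments of $\mathbf{F}$ as held fixed and can be assembled from the input-output sensitivities of the underlying feedforward network via~(\ref{eq:neuralnet_inputoutput_derivatives}), from the total derivative $\partial \hat{\mathbf{y}}_s[k-i]/\partial \mathbf{\Theta}$, which already encapsulates the propagation through all earlier steps of the recursion. Once this distinction is made explicit and the row-stacked form of the global Jacobians is written out, the claimed identities follow directly.
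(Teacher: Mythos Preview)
Your proposal is correct and follows exactly the approach indicated in the paper, which simply states that the result follows from differentiating~(\ref{eq:free_run_simulation_recursion}) and applying the chain rule. Your write-up is in fact considerably more detailed than the paper's one-line justification, making explicit the base case, the direct/indirect dependence split, and the block-selection interpretation of $D^{(k)}$.
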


  \begin{proof}
    The proof follows from differentiating~(\ref{eq:free_run_simulation_recursion})
    and applying the chain rule.
  \end{proof}

  This formula can be interpreted either as a variation of the dynamic
backpropagation~\cite{narendra_identification_1990} adapted to compute the Jacobian
instead of the gradient; or,  as a specific case of real-time recurrent
learning~\cite{williams_experimental_1989} with a special type of recurrent connection.

%=====================================================
\section{Complexity analysis}
\label{sec:complexity-analysis}

In this section we present a novel complexity analysis  comparing
series-parallel  training (SP), parallel training with fixed initial conditions
(P$\mathbf{\Theta}$) and parallel training with extended parameter vector
(P$\mathbf{\Phi}$).
We show that the training methods have similar computational cost for the
nonlinear least-squares formulation.
The number of floating point operations (\textit{flops})
is estimated based on \cite[Table 1.1.2]{golub_matrix_2012}. Low-order terms,
as usual, are neglected in the analysis.

%---------------------------------
\subsection{Neural network output and its partial derivatives}

The backpropagation algorithm described in Section~\ref{sec:modif-backpr}
can be used for training both fully or partially connected networks.
The differences relate to the internal representation of the weight
matrices $W^{(n)}$: for a partially connected network the matrices
are stored using a sparse representation, e.g.\ compressed sparse
column (CSC) representation.

The total number of \textit{flops} required to evaluate the output and to compute 
partial derivatives for a fully connected feedforward network is summarized in
Table~\ref{tab:backprop_flops}. $N_{w}$ and $N_{\gamma}$ are respectively the total number
of weights and of bias terms, such that $N_w+N_\gamma = N_\Theta$.
For this fully connected network:
\begin{eqnarray}
  \nonumber
  N_{w} &=& N_{x}\cdot N_{s1}+N_{s1}\cdot N_{s2}+\cdots+N_{s(\mathcal{L}-1)}\cdot
            N_{s\mathcal{L}}, \\
  \label{eq:Ngamma}
  N_{\gamma} &=&  N_{s1}+N_{s2}+\cdots+N_{s\mathcal{L}}.
\end{eqnarray}
\begin{table}[htpb]
  \setlength\extrarowheight{2pt}
  \centering
  \caption{Modified backpropagation number of \textit{flops} for a fully connected network.}
  \begin{tabular}{|l|m{4cm}|}
    \hline
    \multicolumn{2}{|c|}{Computing Neural Network Output} \\ \hline
    i) Compute  $\mathbf{F(\mathbf{x}; \mathbf{\Theta})}$ ---
    Eq.~(\ref{eq:neuralnet_forward})-(\ref{eq:neuralnet_output})& $2N_w$  \\ \hline \hline
    \multicolumn{2}{|c|}{Computing Partial Derivatives} \\ \hline
    ii) Backward Stage --- Eq.~(\ref{eq:neuralnet_backward})
    & $(2 N_z+ 1)(N_w- N_x N_{s1})$ \\ \hline
    iii) Compute
    $\tfrac{\partial \mathbf{F}}{\partial \mathbf{\Theta}}$
    --- Eq.~(\ref{eq:neuralnet_weight})-(\ref{eq:neuralnet_bias})
    &$N_w\cdot N_z$ \\[2pt] \hline
    iv) Compute
    $\tfrac{\partial \mathbf{F}}{\partial \mathbf{x}}$
    --- Eq.~(\ref{eq:neuralnet_inputoutput_derivatives})
   &$2 N_x\cdot N_{s1} \cdot N_z$ \\[2pt] \hline
  \end{tabular}
  \label{tab:backprop_flops}
\end{table}
Since the more relevant terms of the complexity analysis
in Table~\ref{tab:backprop_flops}
are being expressed in terms of the number of weights $N_w$  the results for a fully connected network are
similar to the ones that would be obtained for a partially
connected network using a sparse representation.

%---------------------------------
\subsection{Number of flops for series-parallel  and parallel training}

\begin{table}[htpb]
  \setlength\extrarowheight{2pt}
  \centering
   \caption{Levenberg-Marquardt number of \textit{flops} per iteration for series-parallel
     training (SP), parallel training with fixed initial conditions
     (P$\mathbf{\Theta}$) and parallel training with extended parameter vector
     (P$\mathbf{\Phi}$). A mark \xmark\ signals which
    calculation is required in each method.}
  \begin{tabular}{|l|m{3.7cm}|c|c|c|}
    \cline{3-5}
    \multicolumn{2}{c|}{} & SP
    & P$\mathbf{\Theta}$
    & P$\mathbf{\Phi}$ \\ \hline
    \multicolumn{2}{|c|}{Computing Error} & & &  \\ \cline{1-2}
    i) Compute  $\mathbf{F(\mathbf{x}; \mathbf{\Theta})}$ & $2N\cdot N_w$
    & \xmark & \xmark & \xmark \\ \hline
    \hline
    \multicolumn{2}{|c|}{Computing  Partial Derivatives} & & & \\  \cline{1-2}
    ii) Backward Stage
    & {\scriptsize $N(2 N_y+ 1)(N_w- N_x N_{s1})$}
    & \xmark & \xmark & \xmark \\ \cline{1-2}
    iii) Compute
    $\tfrac{\partial \mathbf{F}}{\partial \mathbf{\Theta}}$
    & $N\cdot N_w\cdot N_y$
    & \xmark & \xmark & \xmark \\[2pt] \cline{1-2}
    iv) Compute
    $\tfrac{\partial \mathbf{F}}{\partial \mathbf{x}}$
    & $2 N\cdot N_x \cdot N_{s1}\cdot N_y$
    & & \xmark &\xmark  \\[2pt] \cline{1-2}
    v) Equation~(\ref{eq:jacobian_recursive})
    & $2 N\cdot  N_\Theta \cdot (N_y^2 +N_y)$
    &  & \xmark &\xmark \\ \cline{1-2}
    vi) Equation~(\ref{eq:jacobian_recursive_init})
    & $2 N\cdot n_y \cdot N_y\cdot(N_y^2+N_y) $
    &  & & \xmark\\ \hline
    \hline
    \multicolumn{2}{|c|}{Solving Equation~(\ref{eq:levenberg_marquardt})}  & & &\\ \cline{1-2}
    vii) Solve~(\ref{eq:levenberg_marquardt}) ---  $\mathbf{\Theta}$
    & $2  N\cdot N_\Theta^2 + \tfrac{1}{3} N_\Theta^3 $
    & \xmark & \xmark & \\[2pt] \cline{1-2}
    viii) Solve~(\ref{eq:levenberg_marquardt}) ---  $\mathbf{\Phi}$
    & $2 N\cdot N_\Phi^2 + \tfrac{1}{3} N_\Phi^3 $
    & & & \xmark\\[2pt] \hline
  \end{tabular}
  \label{tab:error_flops}
\end{table}

The number of flops of each iteration of the Levenberg-Marquardt
algorithm  is summarized in Table~\ref{tab:error_flops}.
Entries (i) to (iv) in Table~\ref{tab:error_flops} follow directly
from Table~\ref{tab:backprop_flops}, considering  ${N_z = N_y}$ 
and multiplying the costs by $N$ because of the number of different
inputs being evaluated. Furthermore, in entries (v) and (vi)
the  evaluation of~(\ref{eq:jacobian_recursive})
and~(\ref{eq:jacobian_recursive_init}) is accomplished by storing computed values
and performing only one new matrix-matrix product per evaluation.

The cost of solving~(\ref{eq:levenberg_marquardt})
is about $2\cdot N\cdot N_\Theta^2 + \tfrac{1}{3}\cdot N_\Theta^3 $
where the cost $2N\cdot N_\Theta^2$ is due to the multiplication of
the Jacobian matrix by its transpose and $\tfrac{1}{3}N_\Theta^3$ is
due to the needed Cholesky factorization. When using an extended parameter
vector, $N_\Theta$ is replaced with $N_\Phi$ in the analysis.

%---------------------------------
\subsection{Comparing methods}

Assuming the number of nodes in the last hidden layer is greater than the
number of outputs ($N_y < N_{s(\mathcal{L}-1)}$), the inequalities apply:
\begin{eqnarray}
  n_y\cdot N_y \le N_x < N_x\cdot N_{s1} \le N_w < N_\Theta; \label{eq:relations} \\
  N_y < N_y^2 < N_y \cdot N_{s(\mathcal{L}-1)} \le N_w < N_\Theta. \nonumber
\end{eqnarray}

\noindent
From Table~\ref{tab:error_flops} and from the above inequalities
it follows that the cost of each Levenberg-Marquardt iteration is
dominated by the cost of solving Equation~(\ref{eq:levenberg_marquardt}).
Furthermore, ${N_\Phi=N_\Theta+n_y\cdot N_y < 2 N_\Theta}$,
hence, the asymptotic computational cost of the training method S$\mathbf{\Phi}$ is
the same as that of SP and S$\mathbf{\Theta}$
methods: ${\mathcal{O}(N \cdot N_\Theta^2 + N_\Theta^3)}$.

From Table~\ref{tab:error_flops} it is also possible to analyze
the cost of each of the major steps needed in each full iteration
of the algorithm:

\begin{itemize}
\item \textbf{Computing Error:}
  The cost of computing the error is the same for all of the training methods.
\item \textbf{Computing Partial Derivatives:}
The  computation of  partial derivatives has a cost of
$\mathcal{O}(N \cdot N_w \cdot N_y)$
for the SP training method and a cost of $\mathcal{O}(N\cdot N_\Theta \cdot N_y^2)$
for both P$\mathbf{\Phi}$ and P$\mathbf{\Theta}$. For many cases of interest
in system identification, the number of model outputs $N_y$ is small. Furthermore,
${N_\Theta = N_w + N_\gamma \approx N_w}$ (see Eq.~(\ref{eq:Ngamma})).
That is why the cost of computing the partial derivatives
for parallel training is comparable to the cost for series-parallel training.
\item \textbf{Solving Equation~(\ref{eq:levenberg_marquardt}):}
It already has been established that the cost of this
step $\mathcal{O}(N \cdot N_\Theta^2 + N_\Theta^3)$ dominates the computational
cost for all the training methods. Furthermore $n_y\cdot N_y$ is usually much
smaller than $N_\Theta$ such that $N_\Phi \approx N_\Theta$ and the number of flops of this
stage is basically the same for all the training methods.
\end{itemize}

%---------------------------------
\subsection{Memory complexity}

Considering that $N\gg n_y$ , it follows that, for the
three training methods, the storage capacity is dominated
by the storage of the Jacobian matrix
${\tfrac{\partial \mathbf{e}}{\partial \mathbf{\Theta}}}$
or of the matrix resulting from the product
${\left[\tfrac{\partial \mathbf{e}}{\partial \mathbf{\Theta}}\right]^T
  \left[\tfrac{\partial \mathbf{e}}{\partial \mathbf{\Theta}}\right]}$.
Therefore the memory size required by the algorithm is
about $\mathcal{O}(\max(N\cdot N_y \cdot N_\Theta,~N_\Theta^2))$.

For very large datasets and a large number of parameters,
this storage requirement may be prohibitive
and others methods should be used (e.g. stochastic gradient descent or variations).
Nevertheless, for datasets of moderate size and networks with few hundred parameters,
as it is usually the case for system identification,
the use of nonlinear least-squares is a viable option.

%=======================================================
\section{Unifying framework}
\label{sec:unifying-framework}

In this section we present parallel and series-parallel training
in the \textit{prediction error methods} framework~\cite{ljung_convergence_1978,
  ljung_system_1998}. This analysis provides some insight about the situations
in which one training method outperforms the other one.

%----------------------------------------------------------------------------
\subsection{Output error \textit{vs} equation error}
\label{sec:output-vs-equation-error}

To study the previously described problem it is assumed that for a given input
sequence ${\mathbf{u}[k]}$ and a set of initial conditions ${\mathbf{y}^*_0}$
the output was generated by a \textit{``true system''},
described by the following equations:
  \begin{align}
    \mathbf{y}^*[k] =& \mathbf{F}^*(\mathbf{y}^*[k-1], \hdots, \mathbf{y}^*[k-n_y],
                       \mathbf{u}[k-\tau_d], \hdots,
                       \mathbf{u}[k-n_u]; \mathbf{\Theta}^*) + \mathbf{v}[k] \nonumber\\
    \mathbf{y}[k] =& \mathbf{y}^*[k] + \mathbf{w}[k] , \label{eq:true_system}
  \end{align}
where  $\mathbf{F}^*$ and $\mathbf{\Theta}^*$ are
the \textit{``true''} function and parameter vector that describe the
system;
$\mathbf{v}[k]\in \mathbb{R}^{N_y}$ and
${\mathbf{w}[k]\in \mathbb{R}^{N_y}}$ are random variable vectors,
that cause the deviation of the deterministic model from its true value;
$\mathbf{u}[k]$ and $\mathbf{y}[k]$ are the measured input and output vectors;
and, $\mathbf{y}^*[k]$ is the output vector without the effect of the output error.

The random variable $\mathbf{v}[k]$ affects the system
dynamics and is called \textit{equation error}, while the random variable
$\mathbf{w}[k]$ only affects the measured values and is called
\textit{output error}.

%----------------------------------------------------------------------------
\subsection{Optimal predictor}
\label{sec:optimal_predictor}

If the measured values of $\mathbf{y}$ and $\mathbf{u}$ are known at all instants
previous to $k$, the optimal prediction
of $\mathbf{y}[k]$ is the following conditional expectation:
\footnote{The prediction is optimal
  in the sense that the expected squared prediction error is
  minimized~\cite[p.18, Sec. 2.4]{friedman_elements_2001}.}
\begin{equation}
  \label{eq:conditional_expectation}
  \hat{\mathbf{y}}_*[k] = E\left\{\mathbf{y}[k]~\Big\vert~\underline{\mathbf{y}}_{[k]},
                       \underline{\mathbf{u}}_{[k]} \right\}
\end{equation}
where  $\hat{\mathbf{y}}_*[k]$  denotes the optimal prediction
and $E\{\cdot\}$ indicates the mathematical expectation.

Consider the following situations:
\begin{situation}[White equation error]
  \label{thm:white-equation-error}
  The sequence of equation errors $\{\mathbf{v}[k]\}$
  is a white noise process and the output error
  is zero $({\mathbf{w}[k]=0})$.
\end{situation}
\begin{situation}[White output error]
  \label{thm:white-output-error}
  The sequence of output errors $\{\mathbf{w}[k]\}$
  is a white noise process and the equation error
  is zero $(\mathbf{v}[k]=0)$.
\end{situation}

The next two lemmas give the optimal prediction $\hat{\mathbf{y}}_*[k]$
for the two situations above:

\begin{lemma}
  \label{lemma:series-parallel}
  If Situation \ref{thm:white-equation-error} holds
  and the function and parameter vector matches the true ones
  $(\mathbf{F} = \mathbf{F}^*$ and $\mathbf{\Theta} = \mathbf{\Theta}^*)$,
  then the one-step-ahead prediction
  is equal to the optimal prediction $(\hat{\mathbf{y}}_1[k] = \hat{\mathbf{y}}_*[k])$.
\end{lemma}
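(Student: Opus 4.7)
The plan is to prove the equality by computing the conditional expectation in~(\ref{eq:conditional_expectation}) directly from the true system~(\ref{eq:true_system}) under the stated hypotheses, and matching it against the definition of the one-step-ahead prediction in~(\ref{eq:one-step-ahead-prediction}).

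First I would exploit Situation~\ref{thm:white-equation-error} to eliminate the output error: since $\mathbf{w}[k]=\mathbf{0}$, the measured and noise-free outputs coincide, $\mathbf{y}[k]=\mathbf{y}^*[k]$ for every $k$. Substituting this identity into the state equation of~(\ref{eq:true_system}) yields
\begin{equation*}
\mathbf{y}[k] = \mathbf{F}^*(\underline{\mathbf{y}}_{[k]}, \underline{\mathbf{u}}_{[k]}; \mathbf{\Theta}^*) + \mathbf{v}[k],
\end{equation*}
so the measured output at instant $k$ splits into a deterministic function of the conditioning variables plus the innovation $\mathbf{v}[k]$.

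Next I would take the conditional expectation of both sides with respect to $\underline{\mathbf{y}}_{[k]}$ and $\underline{\mathbf{u}}_{[k]}$. The first term is measurable with respect to the conditioning sigma-algebra and therefore passes through the expectation unchanged. For the second term I would use that $\mathbf{v}[k]$, being white, is independent of (and has zero mean conditional on) all past equation errors $\mathbf{v}[k-1],\mathbf{v}[k-2],\ldots$; since $\underline{\mathbf{y}}_{[k]}$ is a function of those past innovations and of past inputs, this gives $E\{\mathbf{v}[k]\mid\underline{\mathbf{y}}_{[k]},\underline{\mathbf{u}}_{[k]}\}=\mathbf{0}$. Hence
\begin{equation*}
\hat{\mathbf{y}}_*[k] = \mathbf{F}^*(\underline{\mathbf{y}}_{[k]}, \underline{\mathbf{u}}_{[k]}; \mathbf{\Theta}^*).
\end{equation*}

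Finally, using the hypothesis $\mathbf{F}=\mathbf{F}^*$ and $\mathbf{\Theta}=\mathbf{\Theta}^*$, the right-hand side is exactly $\mathbf{F}(\underline{\mathbf{y}}_{[k]}, \underline{\mathbf{u}}_{[k]}; \mathbf{\Theta}) = \hat{\mathbf{y}}_1[k]$ by~(\ref{eq:one-step-ahead-prediction}), and the claim follows. The one subtle point, which I expect to be the main obstacle, is making the vanishing of the conditional expectation of $\mathbf{v}[k]$ fully rigorous: strictly speaking ``white noise'' only guarantees zero mean and lack of correlation, so I would either invoke the standard convention that white noise in this setting is an i.i.d.\ sequence with zero mean, or weaken the conclusion to hold in the least-squares (linear-projection) sense, which is the one that actually matters for the predictor-error framework developed in the rest of the section.
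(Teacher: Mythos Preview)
Your proposal is correct and follows essentially the same approach as the paper: use $\mathbf{w}[k]=\mathbf{0}$ to identify $\mathbf{y}[k]$ with $\mathbf{y}^*[k]$, reduce~(\ref{eq:true_system}) to $\mathbf{y}[k]=\mathbf{F}^*(\underline{\mathbf{y}}_{[k]},\underline{\mathbf{u}}_{[k]};\mathbf{\Theta}^*)+\mathbf{v}[k]$, and then take the conditional expectation using the zero-mean property of $\mathbf{v}[k]$. The paper's proof is in fact terser than yours---it simply asserts in a footnote that ``a white noise process has zero mean by definition'' and moves on, whereas you take the extra care to argue measurability of the first term and independence of $\mathbf{v}[k]$ from the conditioning variables, and you correctly flag the uncorrelated-versus-independent subtlety that the paper does not address.
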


\begin{proof}
  Since the output error is zero, it follows that $\mathbf{y}[k]=\mathbf{y}^*[k]$
  and therefore Equation~(\ref{eq:true_system}) reduces to
  $\mathbf{y}[k] = \mathbf{F}^*(\underline{\mathbf{y}}_{[k]}, \underline{\mathbf{u}}_{[k]};
  \mathbf{\Theta}^*)+\mathbf{v}[k]$.

  And, because $\mathbf{v}[k]$ has zero mean\footnote{A white noise process has zero mean by definition.}, it follows that:
  \begin{equation*}
    \hat{\mathbf{y}}_*[k] = E\{\mathbf{y}[k]\,\mid\, \underline{\mathbf{y}}_{[k]},
                       \underline{\mathbf{u}}_{[k]}\} =
    \mathbf{F}^*(\underline{\mathbf{y}}_{[k]}, \underline{\mathbf{u}}_{[k]};
    \mathbf{\Theta}^*)= \hat{\mathbf{y}}_1[k].
  \end{equation*}
\end{proof}

\begin{lemma}
  \label{lemma:parallel}
  If Situation \ref{thm:white-output-error} holds,
  and the function, parameter vector and initial conditions matches the true ones
  $(\mathbf{F} = \mathbf{F}^*$, $\mathbf{\Theta} = \mathbf{\Theta}^*$
  and $\mathbf{y}_0 = \mathbf{y}^*_0)$,
  then the free-run simulation 
  is the optimal prediction $(\hat{\mathbf{y}}_s[k] = \hat{\mathbf{y}}_*[k])$.
\end{lemma}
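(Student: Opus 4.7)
The plan is to show that both sides of the claimed equality $\hat{\mathbf{y}}_s[k] = \hat{\mathbf{y}}_*[k]$ coincide with the noise-free trajectory $\mathbf{y}^*[k]$ generated by the true system. This mirrors the structure of the proof of Lemma~\ref{lemma:series-parallel}, but whereas the one-step predictor could read the noise-free past directly from the measurements (since $\mathbf{w}=0$ there), here the free-run simulation must reconstruct that past recursively from initial conditions.

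First I would establish by induction on $k$ that $\hat{\mathbf{y}}_s[k] = \mathbf{y}^*[k]$ for every $k$. For $1 \le k \le n_y$ this is immediate from the base case of recursion~(\ref{eq:free_run_simulation_recursion}) together with the hypothesis $\mathbf{y}_0 = \mathbf{y}^*_0$. For $k > n_y$, assuming the claim holds for all smaller indices, I substitute $\hat{\mathbf{y}}_s[k-i] = \mathbf{y}^*[k-i]$ into the recursion, use $\mathbf{F} = \mathbf{F}^*$ and $\mathbf{\Theta} = \mathbf{\Theta}^*$, and invoke Situation~\ref{thm:white-output-error} (which gives $\mathbf{v}[k] = 0$) so that~(\ref{eq:true_system}) reduces to $\mathbf{y}^*[k] = \mathbf{F}^*(\mathbf{y}^*[k-1],\ldots,\mathbf{y}^*[k-n_y],\underline{\mathbf{u}}_{[k]};\mathbf{\Theta}^*)$, closing the induction.

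Second, I would show that $\hat{\mathbf{y}}_*[k] = \mathbf{y}^*[k]$ by decomposing $\mathbf{y}[k] = \mathbf{y}^*[k] + \mathbf{w}[k]$ inside the conditional expectation~(\ref{eq:conditional_expectation}). Because $\mathbf{v}\equiv 0$, the trajectory $\mathbf{y}^*[k]$ is a deterministic function of $\mathbf{y}^*_0$ and the input sequence, so it factors out of the expectation as a constant. The remaining term $E\{\mathbf{w}[k]\mid\underline{\mathbf{y}}_{[k]},\underline{\mathbf{u}}_{[k]}\}$ vanishes because $\mathbf{w}[k]$ is white and therefore independent of the earlier noise terms $\mathbf{w}[k-1],\mathbf{w}[k-2],\ldots$, which—together with the deterministic $\mathbf{y}^*[\cdot]$—are all that $\underline{\mathbf{y}}_{[k]}$ carries.

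The delicate point, and the step I expect to require the most care, is the treatment of the conditioning in the second part: one has to argue that $\mathbf{y}^*[k]$ is measurable with respect to the conditioning sigma-algebra so that it can be pulled out of the expectation. This rests on interpreting~(\ref{eq:conditional_expectation}) as conditioning on the entire available past (all prior measurements and the input sequence) and on the assumption that the true initial conditions $\mathbf{y}^*_0$ are given, which together with $\mathbf{v}\equiv 0$ makes $\mathbf{y}^*[k]$ a fixed quantity rather than a random variable. Once this is granted, the independence of $\mathbf{w}[k]$ from the conditioning information closes the argument and yields $\hat{\mathbf{y}}_*[k] = \mathbf{y}^*[k] = \hat{\mathbf{y}}_s[k]$.
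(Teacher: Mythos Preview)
Your proposal is correct and follows essentially the same route as the paper: both sides of the equality are shown to coincide with the noise-free trajectory $\mathbf{y}^*[k]$, using the decomposition $\mathbf{y}[k]=\mathbf{y}^*[k]+\mathbf{w}[k]$ inside the conditional expectation and the fact that, with $\mathbf{v}\equiv 0$ and matching $(\mathbf{F},\mathbf{\Theta},\mathbf{y}_0)$, the free-run simulation reproduces $\mathbf{y}^*[k]$. The paper compresses the induction for $\hat{\mathbf{y}}_s[k]=\mathbf{y}^*[k]$ into a single sentence and does not dwell on the measurability subtlety you flag, but the argument is the same.
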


\begin{proof}
 There is no equation error and therefore:
 \begin{equation*}
   \hat{\mathbf{y}}_*[k] =
   E\{\mathbf{y}[k]\, \mid\,\underline{\mathbf{y}}_{[k]},
   \underline{\mathbf{u}}_{[k]}\} =
   E\{\mathbf{y}^*[k]+\mathbf{w}[k]\, \mid\,\underline{\mathbf{y}}_{[k]},
         \underline{\mathbf{u}}_{[k]}\}
   = \mathbf{y}^*[k] =
   \hat{\mathbf{y}}_s[k],
 \end{equation*}
 where it was used that for matching initial conditions and parameters and in the absence
 of equation error, the noise-free output ${\mathbf{y}^*[k]}$
 is exactly equal to the free-run simulation
 (${\mathbf{y}^*[k] = \hat{\mathbf{y}}_s[k]}$).
\end{proof}

Hence, for $\mathbf{F} = \mathbf{F}^*$, both training methods
minimize an error that approaches the optimal predictor error
${\mathbf{e}_*[k]=\hat{\mathbf{y}}_*[k] - \mathbf{y}[k]}$ as
$\mathbf{\Theta}\rightarrow\mathbf{\Theta}^*$.
Series-parallel training does it for Situation~\ref{thm:white-equation-error},
and parallel training for Situation~\ref{thm:white-output-error}.
It follows from~\cite{ljung_convergence_1978} that, under additional assumptions,
series-parallel training is a consistent estimator for Situation~\ref{thm:white-equation-error}
and parallel training is a consistent estimator for Situation~\ref{thm:white-output-error}.

%=======================================================

\section{Implementation and test results}
\label{sec:examples}

The implementation is in Julia and runs on a computer
with a processor Intel(R) Core(TM) i7-4790K CPU @ 4.00GHz.
For all examples in this paper, the activation function
used in hidden layers is the hyperbolic tangent, the initial values
of the weights $w_{i,j}^{(n)}$ are drawn from a zero mean
normal distribution with standard
deviation ${\sigma=(N_{s (n)})^{-0.5}}$  and the bias terms
$\gamma_i^{(n)}$ are initialized with zeros~\cite{lecun_efficient_2012}.
Also, in all parallel training examples the parameter vector is
extended with the initial conditions for the optimization process (P$\Phi$ training).

The free-run  mean-square simulation error (${\text{MSE} = \tfrac{1}{N}\sum_{n=1}^N (y[k] - \hat{y}[k])^2}$)
is used to compare the models over the validation window.

The first example compares the training method using data from an experimental plant and
the second one investigates different noise configurations on computer generated data.
The code and data used in the numerical examples
are available.\footnote{GitHub repository: 
\href{https://github.com/antonior92/ParallelTrainingNN.jl}{https://github.com/antonior92/ParallelTrainingNN.jl}.}

%----------------------------------------------
\subsection{Example 1: Data from a pilot plant}

In this example, input-output signals were collected
from \textit{LabVolt Level Process Station}
(model \textit{3503-MO}~\cite{labvolt_mobile_2015}).
This plant consists of a tank that is filled with water
from a reservoir. The water is pumped  at fixed speed, while the flow is regulated
using a pneumatically operated control valve driven by a voltage $u[k]$.
The water column height $y[k]$ is indirectly measured using a pressure
sensor at the bottom of the tank. Figure~\ref{fig:pilotplant_val}
shows the free-run simulation over the validation window of models
obtained for this process using parallel
and series-parallel training.

\begin{figure}[ht]
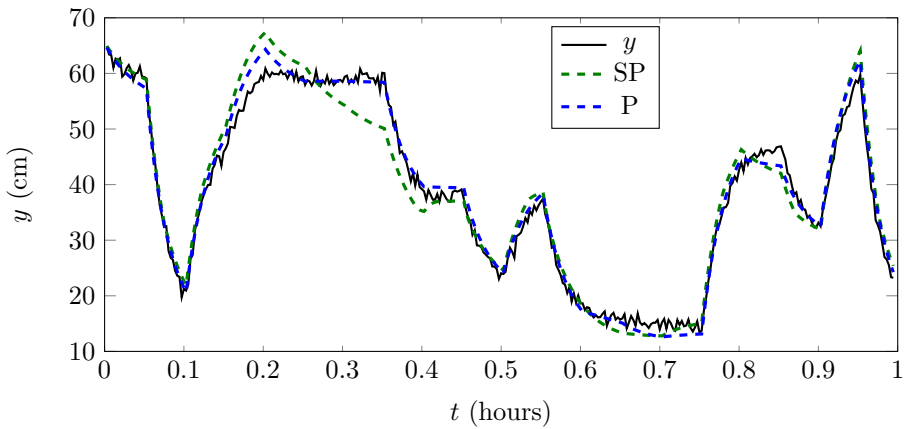

  \centering
  \includepgfplots{pilotplant_val}
  \caption{Validation window for pilot plant. Free-run simulation 
    for models obtained using series-parallel (SP) and parallel (P) training.
    The mean square errors are
    ${\text{MSE}_{\text{SP}} = 1144.6}$;
    ${\text{MSE}_{\text{P}} = 296.2}$.
    The models have ${n_y=n_u=1}$ and 10 nodes in the hidden layer  and
    were trained on a two hour long  dataset sampled at $T_s=10{\rm s}$.
    The same initial parameter guess was used for both training methods.
    The training was 100 epochs long, which took 3.3 and 3.9 s,
    respectively, for series-parallel and parallel training.}
  \label{fig:pilotplant_val}
\end{figure}

Since parameters are randomly initialized,
different realizations will yield different results.
Figure~\ref{fig:pilotplant_boxplots} shows the validation
errors for both training methods for randomly
drawn initial guesses. While parallel training
consistently provides models with better validation results than
series-parallel training, it also has some outliers that result in very poor
validation results. Such outliers probably happen as the algorithm
gets trapped in ``poor'' local minima during parallel training.

\begin{figure}[ht]
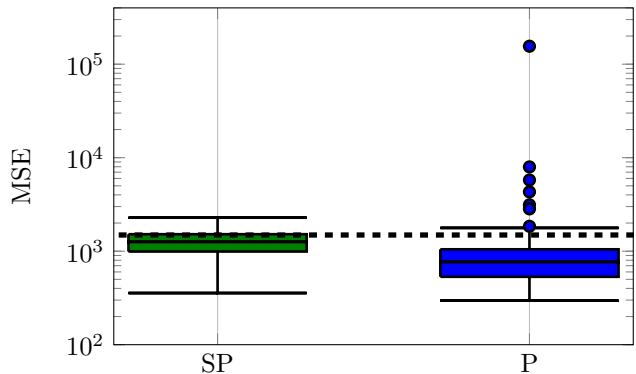

  \centering
  \includepgfplots{boxplots}
  \caption{Boxplots show the distribution of the
    free-run simulation MSE over the validation window
    for models trained using series-parallel (SP) and parallel (P)
    methods under the circumstances specified in
    Figure~\ref{fig:pilotplant_val}. There are 100 realizations
    of the training in each boxplot,
    for each realization the weights $w_{i,j}^{(n)}$
    are drawn from a normal distribution with zero mean standard
    deviation ${\sigma=(N_{s (n-1)})^{-0.5}}$  and the bias terms
    $\gamma_i^{(n)}$ are initialized with zeros~\cite{lecun_efficient_2012}.
    For comparison purposes, the dashed horizontal
      line gives the performance of an ARX linear model ($n_y=1$ and $n_u=1$)
      trained and tested under the same conditions.}
  \label{fig:pilotplant_boxplots}
\end{figure}

The training of the neural network was performed using
normalized data. However, if  \textit{unscaled data} were used instead,
parallel training would yield models with
${\text{MSE} > 10000}$ over the validation window
while series-parallel training can still yield
solutions with a reasonable fit to the validation data. We understand this as another  indicator
of parallel training greater sensitivity to the initial parameter guess:
for unscaled data, the initial guess is far away from meaningful
solutions of the problem, and, while series-parallel training
converges to acceptable results, parallel training gets trapped in
``poor'' local solutions.

%------------------------------------------------------
\subsection{Example 2: Investigating the noise effect}
The non-linear system:~\cite{chen_non-linear_1990}:
\begin{eqnarray}
  y^*[k] &=& (0.8-0.5e^{-y^*[k-1]^2})y^*[k-1]- \nonumber\\
    &&(0.3+0.9e^{-y^*[k-1]^2})y^*[k-2]+u[k-1]+ \nonumber\\
    &&0.2u[k-2]+0.1u[k-1]u[k-2] + v[k] \nonumber\\
y[k] &=& y^*[k] + w[k], \label{eq:ex_nonlinear_system}
\end{eqnarray}
was simulated and the generated dataset was used
to build neural network models. Figure~\ref{fig:simulation_val}
shows the validation results for models obtained  for a
training set generated with white Gaussian equation and output errors
$v$ and $w$. In this section,
we repeat this same experiment for
diverse random processes applied to $v$ and $w$ in order to investigate
how different noise configurations
affect parallel and series-parallel training.

\begin{figure}[h]
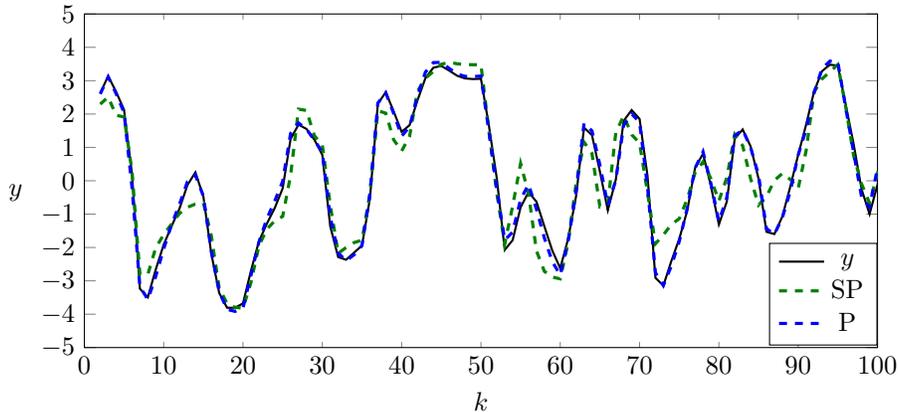

  \centering
  \includepgfplots{simulation_val}
  \caption{Displays the first 100 samples of the free-run simulation in
    the validation window for models trained using series-parallel (SP) and parallel (P)
    methods. The mean square errors are ${\text{MSE}_{\text{SP}} = 0.39}$;
    ${\text{MSE}_{\text{P}} = 0.06}$.
    The models have ${n_y=n_u=2}$ and a single hidden layer with 10 nodes.
    The training set has $N= 1000$ samples and was generated
    with~(\ref{eq:ex_nonlinear_system}) for
    $v$ and $w$ white Gaussian noise with standard deviations $\sigma_v = 0.1$
    and $\sigma_w = 0.5$. 
    The validation window is generated without the noise effect.
    For both, the input $u$ is randomly generated with standard
    Gaussian distribution, each randomly generated value held for 5 samples.
    The training was 100 epochs long, which took 5.0 and 6.1 s
    for, respectively, series-parallel and parallel training.}
  \label{fig:simulation_val}
\end{figure}

%-------------------------------------------
\subsubsection{White noise}

Let $v$ be white Gaussian noise with
standard deviation ${\sigma_v}$ and let $w$ be zero.
Figure~\ref{fig:mse_vs_noiselevel} (a) shows the free-run simulation error
on the validation window
using parallel and series-parallel training
for increasing values of ${\sigma_v}$.
Figure~\ref{fig:mse_vs_noiselevel} (b) shows the
complementary experiment, for which $v$ is zero and $w$
is white Gaussian noise with increasing
larger values of ${\sigma_w}$ being tried out.

In Section~\ref{sec:unifying-framework},
series-parallel training was derived considering
the presence of white equation error and, in this
situation, the numerical results
illustrate the model obtained using this training method presents the
best validation results (Figure~\ref{fig:mse_vs_noiselevel} (a)).
On the other hand, parallel training was derived
considering the presence of white output
error and is significantly better in
this alternative situation (Figure~\ref{fig:mse_vs_noiselevel} (b)).

\begin{figure}[H]
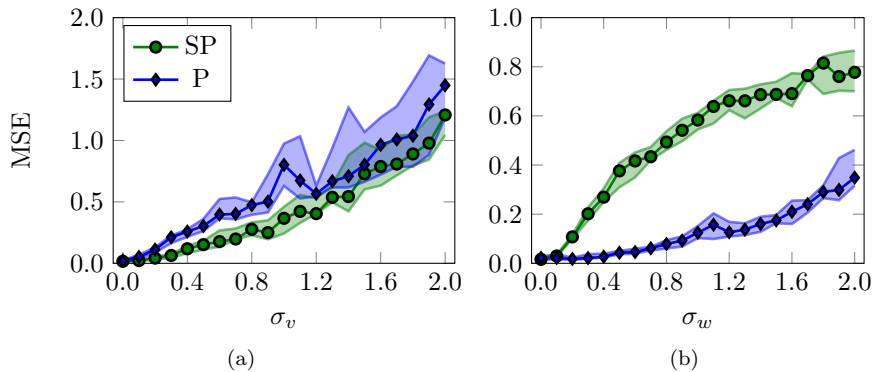

  \centering
  \subfloat[]{\includepgfplots{mse_vs_equationerror}}
  \subfloat[]{\includepgfplots{mse_vs_outputerror}}
  \caption{Free-run simulation MSE on
    the validation window  \textit{vs} noise levels
    for series-parallel and parallel training.
    The main line indicates the median and the shaded region
    indicates interquartile range. These statistics were computed from 12 realizations.
    In (a) $v$ is a Gaussian white process and $w=0$; and, in (b) $w$ is
    a Gaussian white process and $v=0$. }
  \label{fig:mse_vs_noiselevel}
\end{figure}

\subsubsection{Colored noise}

Consider $w=0$ and $v$ a white Gaussian noise filtered by a low pass
filter with cutoff frequency $\omega_c$.
Figure~\ref{fig:mse_vs_noiselevel_colored} shows the free-run simulation error
in the validation window for both parallel and series-parallel training for a
sequence of values of $\omega_c$ and different noise intensities.
The result indicates parallel training provides the best results unless
the equation error has a very large bandwidth.

More extensive tests are
summarized in Table~\ref{tab:filtered_noise}, which shows
the validation errors for a training set with colored Gaussian errors
in different frequency bands. Again, except
for white or large bandwidth equation error, parallel training seems
to provide the smallest validation errors.

\begin{figure}[t]
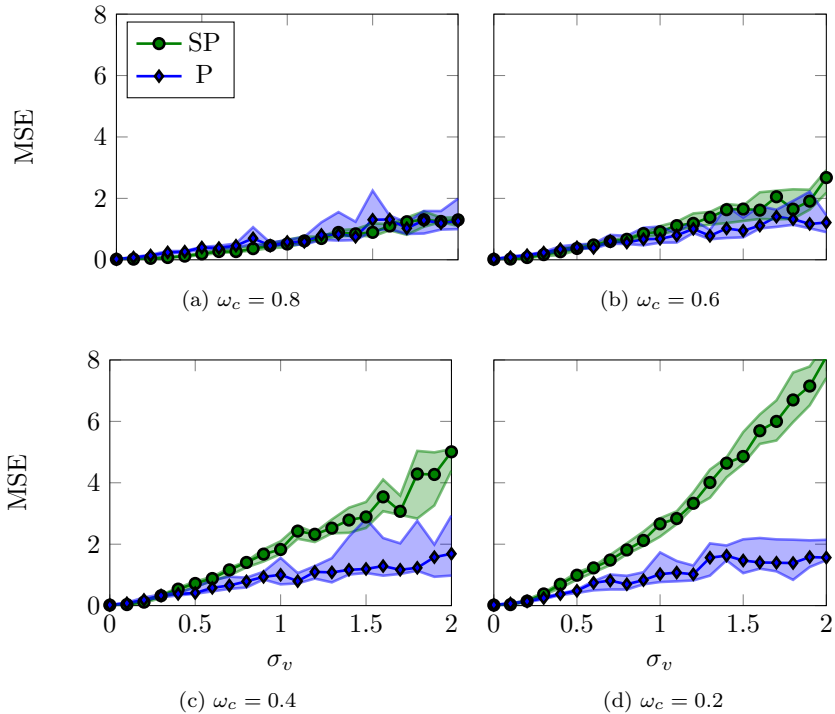

  \centering
  \subfloat[$\omega_c = 0.8$]{\includepgfplots{colored_equation_error1}}
  \subfloat[$\omega_c = 0.6$]{\includepgfplots{colored_equation_error2}}\\
  \centering
  \subfloat[$\omega_c = 0.4$]{\includepgfplots{colored_equation_error3}}
  \subfloat[$\omega_c = 0.2$]{\includepgfplots{colored_equation_error4}}
  \caption{Free-run simulation MSE over the validation window  \textit{vs}
    standard deviation of colored equation error.  The output error $w=0$ and
    the equation error $v$ is a colored Gaussian noise obtained by applying
    a 4th-order lowpass Butterworth filter with cutoff frequency $\omega_c$
    to white Gaussian noise in both the forward and reverse directions.  The figure
    shows the result for different values of $\omega_c$, where $\omega_c$
    is the normalized frequency (with $\omega_c = 1$ the Nyquist frequency).
    The main line indicates the median and the shaded region indicates interquartile range.
    These statistics were computed from 12 realizations.}
  \label{fig:mse_vs_noiselevel_colored}
\end{figure}

Equation error can be interpreted as the effect of unaccounted inputs and
unmodeled dynamics, hence situations where this error is not auto-correlated
are very unlikely. Therefore, the only situations we found series-parallel training
to perform better (when the equation error power spectral density occupy almost
the whole frequency spectrum)  seem unlikely to appear in practice.
This may justify parallel training to produce better
models for real application problems as the pilot plant in Example 1,
the battery modeling described in~\cite{zhang_new_2014}, or the boiler unit
in~\cite{patan_nonlinear_2012}.

\begin{table}[t]
  \centering
  \caption{Free-run simulation MSE on the validation window for parallel and series-parallel
    training. Both the mean and
    the standard deviation are displayed (30\% trimmed estimation computed from 12 realizations).
    In situation \textbf{(a)}, the training data was generated with
    zero output error ($w = 0$) and $v$ is a
    Gaussian random process. In \textbf{(b)},  the training data was
    generated with $v = 0$ and $w$ is a
    Gaussian random process. The Gaussian random process has
    standard deviation $\sigma = 1.0$ and power spectral density
    confined to the given frequency band.
    In both situations, the rows where the frequency ranges from $0.0$
    to $1.0$ (the whole spectrum) corresponds to white noise, in the remaining
    rows we apply a 4th-order lowpass (or highpass) Butterworth filter
    to white Gaussian noise (in both the forward and reverse directions)
    in order to obtain the signal in the desired frequency band.
    The cell of the training method with the best validation results between
    the two models is colored.
    Its colored \colorbox{red!25}{red} when the difference in the MSE is larger than the
    sum of standard deviations and \colorbox{yellow!25}{yellow} when it is not.
    }
  \begin{tabular}{|c|c|c||c|c|}
    \cline{2-5}
    \multicolumn{1}{c|}{}& \multicolumn{2}{c||}{\textbf{(a)}~~$v\not= 0$, $w = 0$}& \multicolumn{2}{c|}{\textbf{(b)}~~$w\not= 0$, $v = 0$} \\
    \hline
    $\omega$ range& SP & P & SP & P\\
    \hline
    $0.0\rightarrow 1.0$ & \cellcolor{red!25}$0.36 \pm 0.09$ & $0.78 \pm 0.18$ & $0.58 \pm 0.03$ & \cellcolor{red!25}$0.13 \pm 0.03$\\
    \hline
    $0.0\rightarrow 0.8$ & \cellcolor{yellow!25}$0.53 \pm 0.13$ & $0.58 \pm 0.10$ & $0.44 \pm 0.02$ & \cellcolor{red!25}$0.12 \pm 0.03$\\
    \hline
    $0.0\rightarrow 0.6$ & $0.94 \pm 0.13$ & \cellcolor{yellow!25}$0.75 \pm 0.23$ & $0.30 \pm 0.05$ & \cellcolor{red!25}$0.15 \pm 0.05$\\
    \hline
    $0.0\rightarrow 0.4$ & $1.86 \pm 0.20$ & \cellcolor{red!25}$1.07 \pm 0.38$ & $0.46 \pm 0.05$ & \cellcolor{red!25}$0.15 \pm 0.02$\\
    \hline
    $0.0\rightarrow 0.2$ & $2.60 \pm 0.26$ & \cellcolor{red!25}$1.16 \pm 0.44$ & $0.71 \pm 0.05$ & \cellcolor{red!25}$0.18 \pm 0.04$\\
    \hline
    \hline
    $0.0\rightarrow 1.0$ &  \cellcolor{red!25} $0.36 \pm 0.09$ & $ 0.78 \pm 0.18$ & $0.58 \pm 0.03$ &  \cellcolor{red!25}$0.13 \pm 0.03$\\
    \hline
    $0.2\rightarrow 1.0$ &  \cellcolor{yellow!25} $0.52 \pm 0.07$ & $0.57 \pm 0.12$ & $0.59 \pm 0.03$ &  \cellcolor{red!25}$0.09 \pm 0.02$\\
    \hline
    $0.4\rightarrow 1.0$ & $0.57 \pm 0.08$ &  \cellcolor{red!25}$0.22 \pm 0.05$ & $0.63 \pm 0.05$ &  \cellcolor{red!25}$0.05 \pm 0.01$\\
    \hline
    $0.6\rightarrow 1.0$ & $0.54 \pm 0.07$ &  \cellcolor{red!25}$0.21 \pm 0.01$ & $0.68 \pm 0.03$ &  \cellcolor{red!25}$0.03 \pm 0.02$\\
    \hline
    $0.8\rightarrow 1.0$ & $0.58 \pm 0.05$ &  \cellcolor{red!25}$0.17 \pm 0.04$ & $0.78 \pm 0.09$ &  \cellcolor{red!25}$0.03 \pm 0.01$\\
    \hline
  \end{tabular}
  \label{tab:filtered_noise}
\end{table}

%---------------------------------------
\subsection{Running time}

In Section~\ref{sec:complexity-analysis} we find out the computational
complexity of $\mathcal{O}(N\cdot N_\Theta^2+N_\Theta^3)$. The first term
$\mathcal{O}(N\cdot N_\Theta^2)$
seems to dominate and in Figure~\ref{fig:timmings} we
show that the running time grows linearly with the number of training samples $N$
and quadratically with the number of parameters $N_\Theta$.

\begin{figure}[ht]
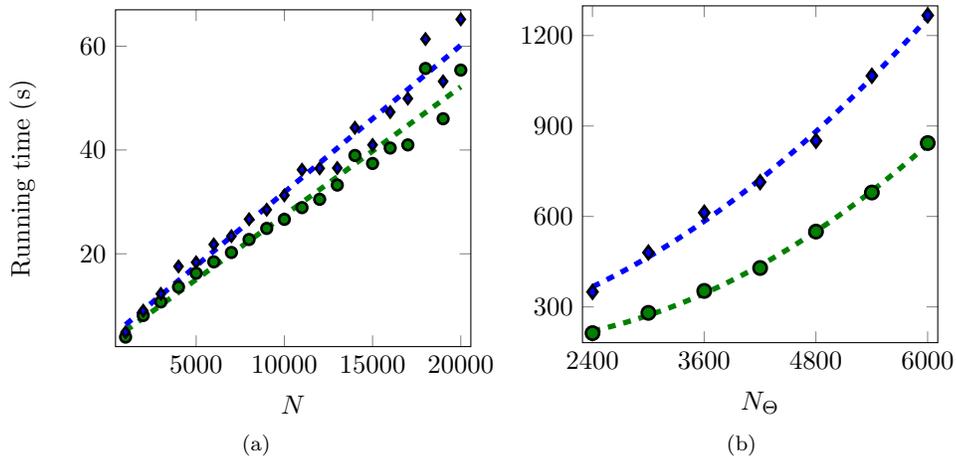

  \centering
  \subfloat[]{\includepgfplots{runtimeN}}
  \subfloat[]{\includepgfplots{runtimeNhidden}}
  \caption{Running time (in seconds) of training (100 epochs).
    The average running time of 5 realizations is displayed
    for series-parallel training (\protect\greencircle)
    and for parallel training (\protect\bluediamond). The
    neural network has a single hidden layer and a total of $N_\Theta$ parameters
    to be estimated. The training set has $N$ samples and was generated
    as described in Figure~\ref{fig:simulation_val}. In (a), we fix $N_\Theta = 61$
    and plot the timings as a function of the number of training samples $N$,
    a line is adjusted to illustrate the running time grows linearly
    with the training size for both training methods. In (b), we fix $N = 10000$
    and plot the timings as a function of the number of parameters $N_\Theta$,
    a second order polynomial is adjusted to illustrate the running time quadratic growth.}
  \label{fig:timmings}
\end{figure}

The running time growing with the same rate for both training methods
implies that the ratio between series-parallel and parallel training running time is
bounded by a constant. For the examples we presented in this paper
the parallel training takes about 20\% longer than series-parallel
training. Hence the difference of running times for sequential
execution does not justify the use of one method over the other.

%===============================================
\section{Discussion}
\label{sec:discussion}

%------------------------------------------------
\subsection{Convergence towards a local minima}

The optimization problem that results from both series-parallel
and parallel training of neural networks are non-convex and may
have multiple local minima. The solution of the Levenberg-Marquardt algorithm
discussed in this paper, as well as most
algorithms of interest for training neural networks (e.g. stochastic gradient
descent, L-BFGS, conjugate gradient), converges towards a local minimum\footnote{
  It is proved in~\cite{more_levenberg-marquardt_1978} that the
  Levenberg-Marquardt (not exactly the one discussed here) converges towards
  a local minimum or a stationary point under simple assumptions.}.
However, there is no guarantee for neither series-parallel
nor parallel training that the solution found is the global optimum. The convergence to
``poor'' local solutions may happen for both training methods, however,
as illustrated in the numerical examples, it seems to happen more often
for parallel training.

For the examples presented in this paper the possibility of being trapped
in ``poor'' local solutions is only a small inconvenience, requiring
the data to be carefully normalized and, in some rare situations,
the model to be retrained. 
An exception are chaotic systems for which small variations
in the parameters may cause great variations in the free-run simulation trajectory,
causing the parallel training objective function
to be very intricate and full of undesirable local minima.

%----------------------------------------------------
\subsection{Signal unboundedness during training}

Signals obtained in intermediary steps of parallel
identification may become unbounded. 
The one-step-ahead predictor, used in series-parallel training,
is always bounded since the prediction depends
only on measured values -- it has a FIR (Finite Impulse Response) structure --
while, for the parallel training, the  free-run simulation could be unbounded
for some choice of parameters because of its dependence on its own
past simulation values.

This is a potential problem because during an intermediary stage of
parallel training a choice of $\mathbf{\Theta}^k$ that results in
unbounded values of $\hat{\mathbf{y}}_s[k]$ may need to be evaluated,
resulting in overflows. Hence, optimization
algorithms that may work well minimizing one-step-ahead prediction
errors $\mathbf{e}_1$, may fail when minimizing simulation
errors $\mathbf{e}_s$.

For instance, steepest descent algorithms with a fixed step size may,
for a poor choice of step size, fall into a region in the parameter space
for which the signals are unbounded and the computation of the gradient
will suffer from overflow and prevent the algorithm from continuing (since it does not have
a direction to follow).
This may also happen in more practical line search algorithms (e.g. the one
described in~\cite[Sec. 3.5]{nocedal_numerical_2006}).

The Levenberg-Marquardt algorithm, on the other hand,
is robust against this kind of problem because every step
$\mathbf{\mathbf{\Theta}}^n + \mathbf{\Delta}\mathbf{\mathbf{\Theta}}^n$
that causes overflow in the objective function computation yields a negative
$\rho_n$\footnote{
  Programming languages as Matlab, C, C++ and  Julia
  return the floating point value encoded for infinity when
  an overflow occur. In this case formula~(\ref{eq:reduction_ratio})
  yields a negative $\rho_n$.},
hence the step is rejected by the algorithm and $\lambda_n$ is increased.
The increase in $\lambda_n$ causes the length of 
$\mathbf{\Delta}\mathbf{\mathbf{\Theta}}^n$ to decrease\footnote{This inverse relation
  between $\lambda_n$ and $\|\mathbf{\Delta}\mathbf{\mathbf{\Theta}}^n\|$
  is explained in~\cite{nocedal_numerical_2006}.}.
Therefore, the step length is decreased until a point is found sufficiently close to the current
iteration such that overflow does not occur. Hence,
the Levenberg-Marquardt algorithm does not fail or stall due to overflows.
Similar reasoning could be used for any trust-region method or for backtracking line-search.

Regardless of the optimization algorithm, signal unboundedness is not a problem for feedforward networks
with bounded activation functions (e.g. Logistic or Hyperbolic Tangent) in the hidden layers,
because its output is always bounded. Hence parallel training of these particular neural networks
is not affected by the previously mentioned difficulties.

%---------------------------------------
\subsection{Time series prediction}

  It is important to highlight that parallel training is inappropriate to train neural networks for predicting time series
  in general. That is, parallel configuration is generally inadequate for the case when there are no inputs ($N_u = 0$).

  For any asymptotically stable system in parallel configuration, the absence of inputs would
  make the free-run simulation converge towards an equilibrium, and even models that should be capable of providing
  good predictions for a few steps-ahead in the time series, may present poor performance for the entire training window.
  Hence, minimizing $\|\mathbf{e}_s\|^2$ will not provide good results in general.

  It still make sense to use series-parallel training for time series models. That
  is because, feeding measured values into the neural network keeps it from converging
  towards zero (for asymptotically stable systems) and makes the estimation robust against unknown
  disturbances affecting the time series. An interesting approach that mixes parallel and series-parallel
  training for time series prediction is given in~\cite{menezes_long-term_2008}.

%-------------------------------------
\subsection{Batch vs online training}

We have based our analysis on batch training.
However, instead of using all the available samples for training
at once we could have fed samples one-by-one or chunk-by-chunk
to the training algorithm (online training).
The choice of parallel or  series-parallel training, however, is orthogonal
to the choice between online and batch training. And most of the ideas presented
in this paper, including: 1) the unified framework; 2) the discussion about poor local minima;
and 3) the study of how colored noise affects the parameter
estimation;  are all applicable to the case of online training.

%-------------------------------------
\subsection{Parallelization}

For the examples presented here the difference of running times for sequential
execution does not justify the use of one method over the other.
Furthermore, both methods have the same time complexity.
Parallel training is, however, much less amenable to parallelization
because of the dependencies introduced by the recursive relations used for computing
its error and Jacobian matrix.

%=====================================
\section{Conclusion and future work}
\label{sec:concl-future-work}

In this paper we have studied different aspects of parallel training
under a nonlinear least squares formulation. Several published works
take for granted that series-parallel training always provides better
results with lower computational cost. The results presented
in this paper show that \textit{this is not always the case}
and that parallel training does provide advantages that justify
its use in several situations.
The results presented in the numerical examples suggest
parallel training can provide models with smaller generalization
error than series-parallel training under more realistic scenarios concerning noise.
Furthermore, for sequential execution the complexity analysis and
the numerical examples suggest the computational cost is not significantly
different for both methods in typical system identification problems.

Nevertheless, series-parallel training has two real advantages over parallel
training: i)~it seems less likely to be trapped in ``poor'' local
solutions; ii)~it is more amenable to parallelization. 
In~\cite{ribeiro_shooting_2017} a technique called multiple shooting
is introduced in the framework of prediction error methods
as a way of reducing the possibility of parallel training
getting trapped in ``poor'' local minima
and also making the algorithm much more amenable
to parallelization. It seems to be a promising way to solve
the shortcomings of parallel training we have described
in this paper.

\section*{Acknowledgments}
This work has been supported by the Brazilian agencies CAPES, CNPq and FAPEMIG.

\section*{References}
\bibliographystyle{elsarticle-num-names}
\bibliography{freerunsim}

% that's all folks
\end{document}